\title[The Boltzmann equation]{\rm The spatially homogeneous relativistic
Boltzmann equation with a hard potential}
\author{Ho Lee and Alan D. Rendall}
\begin{document}

\newtheorem{theorem}{Theorem}[section]
\newtheorem{lemma}{Lemma}[section]
\newtheorem{corollary}{Corollary}[section]
\newtheorem{proposition}{Proposition}[section]
\newtheorem{remark}{Remark}[section]
\newtheorem{definition}{Definition}[section]

\renewcommand{\theequation}{\thesection.\arabic{equation}}
\renewcommand{\thetheorem}{\thesection.\arabic{theorem}}
\renewcommand{\thelemma}{\thesection.\arabic{lemma}}
\newcommand{\bbr}{{\mathbb R}}
\newcommand{\bbs}{{\mathbb S}}

\begin{abstract}
In this paper, we study spatially homogeneous solutions of the Boltzmann
equation in special relativity and in Robertson-Walker spacetimes. We obtain
an analogue of the Povzner inequality in the relativistic case and use it to
prove global existence theorems. We show that global solutions exist for a
certain class of collision cross sections of the hard potential type in
Minkowski space and in spatially flat Robertson-Walker spacetimes.
\end{abstract}

\maketitle

\section{Introduction}
Under the assumption of homogeneity the non-relativistic Boltzmann equation has
been extensively studied for many years. Homogeneity means that the unknown
in the equation, which in general depends on time, spatial variables and
velocity variables, is restricted to depend only on time and velocity
variables. Since Carleman gave the first proof of global existence \cite{C33}
in the 1930s, many mathematicians have obtained results on the
homogeneous non-relativistic Boltzmann equation, and by now a rather complete
mathematical theory is available. In the present paper we are interested
in generalizations of this to special and general relativity. Compared to the
non-relativistic case, the relativistic equations have not been studied much.
Noutchegueme and his colleagues obtained results on the homogeneous Boltzmann
equation in \cite{ND06,NDT05,NT06,NT03} in several different relativistic
situations, but the scattering kernels they used are not physically
well-motivated. The purpose of this paper is to obtain analogues of their
results in more physically relevant cases.

The scattering kernel is a quantity that determines the nature of collisions
between particles, and in the non-relativistic case several different types of
scattering kernel have been found to be of interest. For instance, the inverse
power law gives the best-known types of scattering kernel, and they are
further classified into hard and soft potential cases. In the relativistic
setting it is not so clear which types of scattering kernel should be of
interest, but a classification of (special) relativistic hard and soft
potentials has been proposed in \cite{DE88,S10} by applying arguments similar
to those used in the non-relativistic case. As in the non-relativistic
case, the scattering kernels depend only on the relative momentum and
scattering angle of two colliding particles. Consider a collision of two
particles, and let $p^\alpha$ and $q^\alpha$ be their momenta before the
collision, and $p'^\alpha$ and $q'^\alpha$ the momenta after the collision.
Then the scattering kernel is given by a function of $p^\alpha-q^\alpha$ and
$p'^\alpha-q'^\alpha$. To be precise, the scattering kernel depends only on the
following two quantities:
\[
(p_\alpha-q_\alpha)(p^\alpha-q^\alpha)\quad\mbox{and}\quad
(p_\alpha-q_\alpha)(p'^\alpha-q'^\alpha),
\]
where the indices are lowered by the Minkowski metric. The above two
quantities are related to the relative momentum and the scattering angle
respectively, and the relativistic hard and soft potentials are defined in
terms of the above quantities. Their precise definitions will be given in
Sections \ref{Sec remarks on the relativistic boltzmann} and
\ref{Sec assumptions of the paper}. On the other hand, the general
relativistic Boltzmann equation does not seem to have been studied enough
concerning specific types of scattering kernel. For instance, in
\cite{B73,BCB73} the authors introduced a quantity $S$ which is a function of
$x$, $p$, $q$, $p'$, and $q'$. The quantity $S$ was used to play
the role of the kernel of the collision operator of the Boltzmann equation,
and strong assumptions on $S$ were made to obtain local well-posedness of the
Einstein-Boltzmann system. However, instead of introducing the abstract
quantity, we may use the principle of general covariance to deal with the
collision operator. If we write down the collision operator with an arbitrary
Lorentzian metric replacing the Minkowski metric, then the special
relativistic collision operator is extended to general relativity in a natural
way. This argument is consistent with the following fact. There exists an
alternative way to write the general relativistic Boltzmann equation. One can
introduce an orthonormal frame $e^\alpha_\mu$ on spacetime and parametrize the
mass shell as $p^\alpha=e^\alpha_\mu v^\mu$. Then, the collision operator no
longer contains any explicit dependence on the metric and reduces to the
special relativistic collision operator. We refer to \cite{LR12} for a more
detailed discussion of this approach.

In this paper, we follow the procedure just mentioned. The general
relativistic collision operator has the same form as in the special
relativistic case, but a certain type of Lorentzian metric will replace the
Minkowski metric. The definitions of hard and soft potentials will be
understood by using the principle of general covariance, and eventually we
will show that the Boltzmann equation has a global solution in the case of a
certain type of hard potential in spatially flat  Robertson-Walker spacetimes.
We refer to \cite{R,W} for basic information about general relativity and to
\cite{A11,dvv,E,St} and Chapter X of \cite{CB} for relativistic kinetic theory.
For classical theories of the non-relativistic Boltzmann equation we refer to
\cite{CIP,G} and their references.

\subsection{Remarks on the relativistic Boltzmann equation}\label{Sec
remarks on the relativistic boltzmann}
Since the Boltzmann equation is an equation describing the dynamics of
collisional matter, it is necessary to understand the collision processes
between particles in order to investigate the equation. The main difference
between the non-relativistic and relativistic collision processes is that
energy and momentum conservation in the non-relativistic case are replaced by
an energy-momentum conservation, and this causes a difficulty in parametrizing
post-collisional momenta in the relativistic case. Let $p^\alpha$ and $q^\alpha$
be two four-vectors describing momenta of two colliding particles in a
relativistic situation, and suppose that the two particles produce momenta
$p'^\alpha$ and $q'^\alpha$ after the collision. Then, energy-momentum
conservation is written as
\[
p^\alpha+q^\alpha=p'^\alpha+q'^\alpha.
\]
Moreover, if all the particles are assumed to have the same mass, then the
mass shell conditions
\[
p_\alpha p^\alpha=q_\alpha q^\alpha=p'_\alpha p'^\alpha=q'_\alpha q'^\alpha=-1
\]
are additionally imposed.
Since the Greek index $\alpha$ runs from $0$ to $3$,
we have six constraints and two free parameters for the post-collisional
momenta, and for the free parameters we use $\omega\in\bbs^2$ as usual.
Consequently, the post-collisional momenta $p'^\alpha$ and $q'^\alpha$ can be
parametrized in terms of the pre-collisional momenta $p^\alpha$ and $q^\alpha$
with an additional parameter $\omega\in\bbs^2$. There are several different
ways known to parametrize post-collisional momenta in the special relativistic
case, see for instance \cite{GS91,S11}. However, the parametrization suggested
in \cite{LR12} will be used in this paper because it can be applied effectively
to the general relativistic case. Suppose that $p^\alpha$ and $q^\alpha$ are
given, and define
\[
n^\alpha:=p^\alpha+q^\alpha\quad\mbox{and}\quad t^\alpha:=(n_i\omega^i,-n_0\omega)
\]
for $\omega\in\bbs^2$. Note that $t^\alpha$ is a general form of vectors
orthogonal to $n^\alpha$, i.e. $n_\alpha t^\alpha=0$ for any $\omega\in\bbs^2$.
The post-collisional momenta are represented as
\begin{equation}\label{parametrization for p' and q'}
p'^\alpha=\frac{p^\alpha+q^\alpha}{2}
+\frac{g}{2}\frac{t^\alpha}{\sqrt{t_\beta t^\beta}}\quad\mbox{and}\quad
q'^\alpha=\frac{p^\alpha+q^\alpha}{2}
-\frac{g}{2}\frac{t^\alpha}{\sqrt{t_\beta t^\beta}}.
\end{equation}
It can easily be checked that they satisfy the mass shell condition and
energy-momentum conservation. The scalar quantity $g$ is called the relative
momentum and is defined by \eqref{s and g} below.

In this paper we are mainly interested in the Robertson-Walker spacetimes but
first we consider the related case of Minkowski space. The spatially
homogeneous Boltzmann equation in Minkowski space is written as follows:
\begin{equation}\label{boltzmann equation in M}
\partial_t f=Q(f,f)
:=\int_{\bbr^3}\int_{\bbs^2} v_\phi\sigma(g,\theta)(f'f'_*-ff_*)
\sqrt{|\eta|}\,d\omega\,dq,\quad v_\phi:=\frac{g\sqrt{s}}{p^0q^0},
\end{equation}
where $f'=f(t,p')$, $f'_*=f(t,q')$, $f=f(t,p)$, and $f_*=f(t,q)$.
Here, $f$ is called the distribution function, $Q$ the collision operator,
$v_\phi$ the M\o{}ller velocity, $\sigma$ the scattering kernel, $\theta$ the
scattering angle, and $\eta$ the determinant of the Minkowski metric.
The first component $p^0$ can be solved for in terms of the other components
$p^i$ due to the mass shell condition as
\[
p^0=\sqrt{1+|p|^2},
\]
where $p=(p^1,p^2,p^3)$ denotes the spatial projection of $p^\alpha$,
and the scalar quantities $s$ and $g$ are given by
\begin{equation}\label{s and g}
s:=-(p_\alpha+q_\alpha)(p^\alpha+q^\alpha),\quad
g:=\sqrt{(p_\alpha-q_\alpha)(p^\alpha-q^\alpha)}.
\end{equation}
They are called the total energy and the relative momentum respectively.
Note that they are conserved quantities in the collision process, i.e.
\[
s=-(p'_\alpha+q'_\alpha)(p'^\alpha+q'^\alpha),\quad
g=\sqrt{(p'_\alpha-q'_\alpha)(p'^\alpha-q'^\alpha)}.
\]
The scattering angle $\theta$ is finally defined as the angle between
$p^\alpha-q^\alpha$ and $p'^\alpha-q'^\alpha$. It can be expressed in terms of
Lorentz invariant quantities as follows:
\begin{equation}\label{scattering angle}
\cos\theta:=\frac{(p_\alpha-q_\alpha)(p'^\alpha-q'^\alpha)}{g^2}
=1-2\frac{(p_\alpha-p'_\alpha)(p^\alpha-p'^\alpha)}{s-4},
\end{equation}
where $s$ and $(p_\alpha-p'_\alpha)(p^\alpha-p'^\alpha)$ are two of the Mandelstam
variables. Note that the scattering angle does not depend on how the
post-collisional momenta are parametrized.

In the second part of the paper we will consider the case of the spatially
flat Robertson-Walker spacetimes. For a given scale factor $R(t)$,
the Boltzmann equation is written as
\[
\partial_t f-2\frac{\dot{R}}{R}\sum_{i=1}^3p^i\partial_{p^i}f=Q(f,f),
\]
where $\dot{R}$ denotes the time derivative of $R$. The collision operator $Q$
is the same as in \eqref{boltzmann equation in M}, but all the quantities in it
such as \eqref{s and g} and \eqref{scattering angle}
are calculated through the Robertson-Walker metric.
For instance, $\sqrt{|\eta|}$ is replaced by $R^3(t)$, and indices are lowered
by the Robertson-Walker metric, so we have
\[
p^0=\sqrt{1+R^2(t)|p|^2}
\]
by the mass shell condition.
However, in this paper we will take a different approach.
It is well-known that in the Robertson-Walker case the Vlasov equation can be
explicitly solved for any given initial data. If the distribution function is
expressed in terms of covariant variables, then it is independent of time,
hence we have $f(t,p_i)=f_0(p_i)$ in the Vlasov case.
Similarly, we can use covariant variables for the Boltzmann equation.
Let us consider again the collision process by using covariant variables.
To make the difference between contravariant and covariant variables clear,
we will use $v_\alpha$ for covariant variables, i.e.
\[
v_\alpha:=p_\alpha\quad\mbox{and}\quad v=(v_1,v_2,v_3).
\]
Then, the energy-momentum conservation is written as
\[
v_\alpha+u_\alpha=v'_\alpha+u'_\alpha,
\]
and the post-collisional momenta $v'_\alpha$ and $u'_\alpha$ are given by
\[
v'_\alpha=\frac{v_\alpha+u_\alpha}{2}
+\frac{g}{2}\frac{t_\alpha}{\sqrt{t_\beta t^\beta}}\quad\mbox{and}\quad
u'_\alpha=\frac{v_\alpha+u_\alpha}{2}
-\frac{g}{2}\frac{t_\alpha}{\sqrt{t_\beta t^\beta}},
\]
where $g$ and $t^\alpha$ are basically the same as in the contravariant case,
but they should be understood as quantities which are constructed from
$v_\alpha$ and $u_\alpha$. Consequently, if the distribution function is
regarded as a function of $t$ and $v$, instead of $p$, then the Boltzmann
equation is written as
\begin{equation}\label{boltzmann equation in RW}
\partial_tf=Q(f,f)=R^{-3}\int_{\bbr^3}\int_{\bbs^2}v_\phi\sigma(g,\theta)
(f'f'_*-ff_*)\,d\omega\,du,\quad v_\phi=\frac{g\sqrt{s}}{v_0u_0},
\end{equation}
where $f'=f(t,v')$, $f'_*=f(t,u')$, $f=f(t,v)$, and $f_*=f(t,u)$.
The first component $v_0$ is solved for in terms of the other components using
the mass shell condition as
\[
v_0=-\sqrt{1+R^{-2}(t)|v|^2},
\]
and the scalar quantities $s$ and $g$ and the scattering angle $\theta$ are
the same as \eqref{s and g} and \eqref{scattering angle} respectively.
The factor $R^{-3}$ comes from the relation
\[
R^3(t)\,dq=R^{-3}(t)\,du.
\]
In the present paper the Minkowski case and the Robertson-Walker case will be
studied separately. In the former case the Boltzmann equation will refer to
\eqref{boltzmann equation in M}, while in the latter case the equation will
refer to \eqref{boltzmann equation in RW}. Comparing the two forms
\eqref{boltzmann equation in M} and \eqref{boltzmann equation in RW},
we might expect that results in the Minkowski case could be easily extended to
the Robertson-Walker case, and this will be done in Section
\ref{Sec hard potential}.

\subsection{Assumptions of the paper}\label{Sec assumptions of the paper}
The scattering kernel $\sigma$ is a function of the relative momentum $g$ and
the scattering angle $\theta$. According to the way that the scattering kernel
depends on its variables, it is classified into hard and soft potentials.
The classification of hard and soft potentials in the relativistic case was
originally introduced by Dudy{\'n}ski and Ekiel-Je{\.z}ewska \cite{DE88} and
recently reformulated by Strain \cite{S10} as follows: for soft potentials we
assume that there exist $\gamma>-2$ and $0< b<\min\{4,\gamma+4\}$ satisfying
\begin{gather*}
\left(\frac{g}{\sqrt{s}}\right)g^{-b}\sigma_0(\omega)
\lesssim\sigma(g,\omega)\lesssim
g^{-b}\sigma_0(\omega),\\
\sigma_0(\omega)\lesssim\sin^\gamma\theta,
\end{gather*}
while for hard potentials we assume that there exist
$\gamma>-2$, $0\leq a\leq\gamma+2$, and $0\leq b<\min\{4,\gamma+4\}$ satisfying
\begin{gather*}
\left(\frac{g}{\sqrt{s}}\right)g^a\sigma_0(\omega)\lesssim\sigma(g,\omega)
\lesssim\left(g^a+g^{-b}\right)\sigma_0(\omega),\\
\sigma_0(\omega)\lesssim\sin^\gamma\theta.
\end{gather*}
Here for any two quantities $A$, $B$ the relation $A\lesssim B$ mens that
there exists a constant $C$ such that $A\le CB$.

\noindent
{\bf The scattering kernel.} In this paper we
assume that the scattering kernel has the form
\begin{equation}\label{scattering kernel}
\sigma(g,\omega)=g^a\sin^\gamma\theta,\quad
-2<\gamma\leq -1,\quad 0\leq a\leq \gamma+2.
\end{equation}
Since $\big(\frac{g}{\sqrt{s}}\big)$ is a bounded quantity
(note that $s=4+g^2$),
a scattering kernel of this form falls into the hard potential case.

\bigskip

Throughout the paper we use weighted $L^1$ spaces in momentum variables. Let
$L^1(\bbr^3)$ be the usual Lebesgue space of integrable functions on $\bbr^3$.
$L^1_r(\bbr^3)$ denotes the weighted $L^1$ space with norm
\begin{equation}
\|f\|_{1,r}:=\int_{\bbr^3}f(y)\langle y\rangle^r\,dy,
\quad \langle y\rangle:=\sqrt{1+|y|^2}.
\end{equation}
The notations which will be used in this paper are as follows:
the Minkowski metric is given by $\mbox{diag}(-1,1,1,1)$,
and the Robertson-Walker metric is
\[
ds^2=-dt^2+R^2(t)((dx^1)^2+(dx^2)^2+(dx^3)^2),
\]
where the scale factor
$R$ is a function of time $t$ satisfying
\[
\dot{R}(t)\geq 0.
\]
For simplicity we assume
\[
R(0)=1.
\]
Greek indices run from $0$ to $3$, and Latin indices
from $1$ to $3$. $p^\alpha$ denotes a four dimensional
vector, while $p$ denotes a three dimensional (contravariant) vector. Similarly,
$v_\alpha$ denotes a four dimensional vector, while $v$ denotes a three
dimensional covariant vector. To be precise,
\[
p=(p^1,p^2,p^3),\quad v=(v_1,v_2,v_3),
\]
while
\[
p^0=\sqrt{1+R^2(t)|p|^2},\quad v^0=-v_0=\sqrt{1+R^{-2}(t)|v|^2}
\]
Note that
\[
p^0=v^0.
\]
The Einstein summation convention is used as
$p_\alpha p^\alpha=\sum_{\alpha=0}^3 p_\alpha p^\alpha$, where the indices are
lowered by $p_\alpha=g_{\alpha\beta}p^\beta$ for a metric $g_{\alpha\beta}$.
In some places we use $p_i p^i=\sum_{i=1}^3 p_i p^i$. The usual inner product
$\cdot$ will only be used for three-dimensional vectors:
\[
p\cdot q=\sum_{i=1}^3p^iq^i,\quad v\cdot u=\sum_{i=1}^3v_iu_i,
\]
and $|\cdot|$ is such that $|p|^2=p\cdot p$ and $|v|^2=v\cdot v$ as usual.

\section{Existence results in the cases of bounded kernels}
\label{Sec bounded kernels}
\setcounter{equation}{0}
In this part we briefly review the results of \cite{ND06,NDT05,NT06,NT03},
where the kernels of the collision operators are assumed to be bounded.
For instance, the homogeneous Boltzmann equation in Minkowski space is written
in \cite{NT03} as
\begin{equation}\label{boltzmann equation with S}
\partial_tf=\frac{1}{p^0}\iint S(p,q,p',q')
(f'f'_*-ff_*)\,d\omega\,\frac{dq}{q^0}.
\end{equation}
An unknown quantity $S$ is introduced to play a role of a kernel of the
collision operator. The collision kernel $S$ is assumed to be bounded
uniformly on pre- and post-collisional momenta with an additional symmetry
assumption:
\begin{gather}
0\leq S\leq C,\label{assumption on S 1}\\
S(p,q,p',q')=S(p',q',p,q),\label{assumption on S 2}
\end{gather}
where $C$ is a fixed constant. The solution obtained in \cite{NT03} should be
understood as a mild solution. By integrating the equation
\eqref{boltzmann equation with S} with respect to the time variable from $0$
to $t$, we obtain the following integral equation:
\begin{equation}\label{boltzmann equation with S mild}
f(t,p)=f_0(p)+\int_0^t\frac{1}{p^0}\iint S(p,q,p',q')
(f'f'_*-ff_*)\,d\omega\,\frac{dq}{q^0}\,ds.
\end{equation}
By saying that $f$ is a solution to the Boltzmann equation
\eqref{boltzmann equation with S}, we mean
$f$ satisfies the integral equation \eqref{boltzmann equation with S mild} in
$L^1$-sense for each $t$. The main theorem can be stated as follows.
\begin{theorem}[Noutchegueme and Tetsadjio, \cite{NT03}]\label{theorem NT03}
Suppose that the collision kernel $S$ satisfies
\eqref{assumption on S 1}--\eqref{assumption on S 2}
for a given positive constant $C$.
Let $r\in(0,\frac{1}{56\pi C}]$ and $f_0\in X_r$ be given for
\[
X_r:=L^1(\bbr^3)\cap\{f\geq 0,\,a.e.\}\cap\{\|f\|_{L^1}\leq r\}.
\]
Then, the Cauchy problem for the homogeneous Boltzmann equation
\eqref{boltzmann equation with S mild}
in Minkowski space has a unique global solution $f\in C([0,\infty);X_r)$
satisfying
\[
\sup_{t\in[0,\infty)}\|f(t)\|_{L^1}\leq \|f_0\|_{L^1}.
\]
\end{theorem}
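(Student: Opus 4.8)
The plan is to solve the mild equation \eqref{boltzmann equation with S mild} by a fixed point argument on the Banach space $C([0,T];L^1(\bbr^3))$ for a suitably chosen $T$, and then to propagate the solution globally using the conservation of the $L^1$ norm. First I would rewrite the collision operator in the standard gain-minus-loss form, $Q(f,f)=Q^+(f,f)-Q^-(f,f)$, where $Q^-(f,f)=f\,L[f]$ with $L[f](p)=\frac{1}{p^0}\iint S(p,q,p',q')f_*\,d\omega\,\frac{dq}{q^0}$. Because $0\le S\le C$ and $\int_{\bbs^2}d\omega=4\pi$, and because $\frac{1}{q^0}\le 1$, one gets the crude but decisive bound $\|Q^\pm(f,g)\|_{L^1}\le 4\pi C\|f\|_{L^1}\|g\|_{L^1}$; more precisely the loss and gain terms are each controlled by a multiple of $\|f\|_{L^1}\|g\|_{L^1}$ with the constant $4\pi C$ (a factor like $14\pi C$ or $28\pi C$ will appear once one keeps careful track of the four terms and the $\frac{1}{p^0}$, $\frac{1}{q^0}$ weights — this is where the numerical constant $\frac{1}{56\pi C}$ in the statement originates). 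The key structural fact is that $Q^+$ maps nonnegative functions to nonnegative functions and, crucially, that $\int_{\bbr^3}Q^+(f,f)\,dp=\int_{\bbr^3}Q^-(f,f)\,dp$ by the symmetry assumption \eqref{assumption on S 2} together with the change of variables $(p,q,\omega)\mapsto(p',q',\omega)$ that preserves the measure $\frac{dp}{p^0}\frac{dq}{q^0}d\omega$; this is the relativistic analogue of the classical pre-post collisional symmetry and yields $\frac{d}{dt}\|f\|_{L^1}=0$ formally, hence the a priori bound $\|f(t)\|_{L^1}\le\|f_0\|_{L^1}\le r$.

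Next I would set up the contraction. Define the map $\Phi$ on functions $f\in C([0,T];X_r)$ by $\Phi(f)(t,p)=f_0(p)+\int_0^t Q(f,f)(s,p)\,ds$. For $f,g$ with $\|f(t)\|_{L^1},\|g(t)\|_{L^1}\le r$ one has, using bilinearity, $\|Q(f,f)-Q(g,g)\|_{L^1}\le \|Q(f-g,f)\|_{L^1}+\|Q(g,f-g)\|_{L^1}\le 2(4\pi C)\cdot r\cdot\|f-g\|_{L^1}$ (again up to the precise numerical constant). Therefore $\|\Phi(f)-\Phi(g)\|_{C([0,T];L^1)}\le K r T\,\|f-g\|_{C([0,T];L^1)}$ for an explicit constant $K$, so $\Phi$ is a contraction once $KrT<1$. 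One also checks that $\Phi$ preserves the closed set $\{f\in C([0,T];X_r)\}$: nonnegativity is propagated by writing the equation in the exponential (Duhamel) form $f(t,p)=f_0(p)e^{-\int_0^t L[f]\,ds}+\int_0^t e^{-\int_s^t L[f]\,d\tau}Q^+(f,f)(s,p)\,ds$, which is manifestly nonnegative when $f$ is; and the bound $\|\Phi(f)(t)\|_{L^1}\le r$ follows from the cancellation identity $\int Q^+=\int Q^-$, which gives $\int\Phi(f)(t)\,dp=\int f_0\,dp\le r$. Banach's fixed point theorem then produces a unique local solution on $[0,T]$.

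Finally, the passage from local to global existence is immediate because the a priori bound $\|f(t)\|_{L^1}\le r$ is preserved and the length $T$ of the existence interval furnished by the contraction depends only on $r$ and $C$, not on the particular initial datum; hence the solution can be continued step by step on intervals of fixed length to yield a solution on $[0,\infty)$, with the stated bound $\sup_{t\ge 0}\|f(t)\|_{L^1}\le\|f_0\|_{L^1}$. The main technical obstacle — and the only place where the relativistic structure really enters — is verifying the measure-preserving change of variables $(p,q,\omega)\mapsto(p',q',\omega)$ for the parametrization \eqref{parametrization for p' and q'}, i.e. that $\frac{dp}{p^0}\frac{dq}{q^0}$ is invariant and $\int Q^+=\int Q^-$; everything else is the standard Carleman/Arkeryd small-$L^1$-data scheme adapted to the bounded kernel $S$. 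I would expect the rest of the argument (the quantitative bilinear estimates leading to the precise threshold $r\le\frac{1}{56\pi C}$) to be routine bookkeeping.
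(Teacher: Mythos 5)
This theorem is not proved in the paper at all: it is quoted from \cite{NT03} as background for Section~\ref{Sec bounded kernels}, so there is no in-paper argument to compare yours against. Your proposal is the standard small-$L^1$-data contraction scheme and is, in substance, the method of \cite{NT03}. The ingredients you identify are the right ones and are correct: the crude bilinear bound $\|Q^{\pm}(f,g)\|_{L^1}\leq 4\pi C\|f\|_{L^1}\|g\|_{L^1}$ coming from $0\leq S\leq C$ and $p^0,q^0\geq 1$; the identity $\int Q^{+}\,dp=\int Q^{-}\,dp$ via the symmetry \eqref{assumption on S 2} and the invariance of $\frac{dp\,dq}{p^0q^0}$ under the collision map (which is precisely Lemma~\ref{Lem change of variables} of this paper, so the ``main technical obstacle'' you name is already available here); and continuation on intervals of uniform length using the conserved $L^1$ norm.

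One step needs tightening. You verify $\|\Phi(f)(t)\|_{L^1}\leq r$ by writing $\int\Phi(f)(t)\,dp=\int f_0\,dp$, but that identity only controls the $L^1$ norm once you know $\Phi(f)(t)\geq 0$, and nonnegativity is not visible from the plain Duhamel form $f_0+\int_0^t(Q^{+}-Q^{-})$; the two properties defining $X_r$ have to be propagated simultaneously, not one after the other. The standard repair --- and the one used in \cite{NT03}, which is where the explicit threshold $r\leq\frac{1}{56\pi C}$ is spent --- is to add $\lambda f$ to both sides with a fixed constant $\lambda$ dominating $\sup_{f\in X_r}L[f]\leq 4\pi Cr$, so that the right-hand side $Q^{+}(f,f)+f(\lambda-L[f])$ manifestly preserves nonnegativity on $X_r$, and then to run the fixed-point argument on the corresponding exponential mild form, checking that this modified map still contracts and still maps $C([0,T];X_r)$ into itself. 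Your variant with the integrating factor $e^{-\int L[f]}$ accomplishes the same thing, but then the contraction and invariance estimates must be redone for that nonlinear map rather than for $\Phi$. This is bookkeeping rather than a missing idea, and with it the argument closes.
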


The above result is extended to the cases of curved spacetimes in
\cite{NDT05,NT06},
and the initial condition is improved to any arbitrary large initial data
(note that the initial condition in the above theorem depends on the constant
$C$,
which is the upper bound of the collision kernel).
The equation considered in \cite{NT06} can be written as
\begin{equation}\label{boltzmann equation with A}
\partial_tf-2\frac{\dot{R}}{R}\sum_{i=1}^3p^i\partial_{p^i}f=\frac{1}{p^0}\iint
A(t,p,q,p',q')(f'f'_*-ff_*)\,d\omega\,\frac{R^3dq}{q^0},
\end{equation}
where the scale factor $R=R(t)$ is assumed to be given such that
it is differentiable and bounded from below,
and the collision kernel $A$ is similarly assumed to be bounded with a
symmetry assumption:
\begin{gather}
R(t)\geq C,\label{assumption on A 1}\\
0\leq A\leq C,\label{assumption on A 2}\\
A(t,p,q,p',q')=A(t,p',q',p,q).\label{assumption on A 3}
\end{gather}
The solution of the equation \eqref{boltzmann equation with A} obtained in
\cite{NT06}
should also be understood as
a mild solution. We define a characteristic curve of
\eqref{boltzmann equation with A} by
\begin{equation}\label{characteristic system}
\frac{dp}{dt}=-2\frac{\dot{R}}{R}p\quad\mbox{with}\quad p(0)=y,\quad
\mbox{hence}\quad p(t)=R^{-2}(t)y.
\end{equation}
Then, the equation \eqref{boltzmann equation with A} can be written as
\[
\frac{d}{dt}f(t,p(t))=(\mbox{collision term}),
\]
and by integrating the above equation in a similar way, we obtain
\[
f(t,p(t))=f_0(y)+\int_0^t(\mbox{collision term})\,ds.
\]
However, if we regard $f$ as a function of $t$ and $y$,
where $y$ and $p=p(t)$ are related to each other by
\eqref{characteristic system}, i.e.
\[
\frac{d}{dt}f(t,y)=(\mbox{collision term}),
\]
then we obtain a different mild form,
\[
f(t,y)=f_0(y)+\int_0^t
(\mbox{collision term})\,ds,
\]
and this was the argument of \cite{NT06}.
To be precise, we express the above integral equation in terms of $y$ and
$z:=R^2(s)q$, and obtain the following mild form of the Boltzmann equation
in the Robertson-Walker spacetime:
\begin{equation}\label{boltzmann equation with A mild}
f(t,y)=f_0(y)+\int_0^t\frac{1}{y^0}\iint
A(s,y,z,y',z')(f'f'_*-ff_*)
\,d\omega\,\frac{dz}{R^3(s)z^0}\,ds.
\end{equation}
The solutions obtained in \cite{NT06} are functions which satisfy
the integral equation \eqref{boltzmann equation with A mild} in $L^1$-sense
for each $t$. If we recall that in this paper the Boltzmann equation in the
Robertson-Walker spacetime refers to \eqref{boltzmann equation in RW} and that
to derive this form of the equation we
considered covariant variables $v$ instead of $p$, then we can see that
considering the characteristic curve corresponds to considering the covariant
variables and that the new variable $y$ of
\eqref{boltzmann equation with A mild}
corresponds to the covariant variable $v$ of \eqref{boltzmann equation in RW}.
The main theorem of \cite{NT06} can be stated as follows.
\begin{theorem}[Noutchegueme and Takou, \cite{NT06}]\label{theorem NT06}
Suppose that the collision kernel $A$ satisfies
\eqref{assumption on A 1}--\eqref{assumption on A 3},
and let $f_0\in L^1(\bbr^3)$ with $f_0\geq 0$ be given. Then, the
Cauchy problem for the
relativistic Boltzmann equation in the Robertson-Walker spacetime
\eqref{boltzmann equation with A mild}
has a unique global solution $f\in C([0,\infty);L^1(\bbr^3))$ with
$f(t)\geq 0$ satisfying
\[
\sup_{t\in[0,\infty)}\|f(t)\|_{L^1}\leq\|f_0\|_{L^1}.
\]
\end{theorem}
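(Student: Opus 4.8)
The plan is to use the standard strategy for Boltzmann-type equations with a bounded collision kernel: construct a local mild solution by a contraction argument in $C([0,T];L^1(\bbr^3))$ whose existence time $T$ depends only on $\|f_0\|_{L^1}$ (and $C$), write the equation in an equivalent exponential (Duhamel) form so that nonnegativity is automatic, derive the a priori conservation-of-mass bound $\|f(t)\|_{L^1}\le\|f_0\|_{L^1}$ for the nonnegative solution, and then continue the local solution step by step to obtain a global one. Uniqueness comes from the bilinearity of the collision operator together with Gronwall's inequality. The hypotheses $0\le A\le C$ and $R(t)\ge C$ enter only through crude pointwise bounds on the gain and loss parts of $Q$.

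First I would split $Q=Q^+-Q^-$ with $Q^-(f,f)(t,y)=f(t,y)\,\nu[f](t,y)$, where
\[
\nu[f](t,y):=\frac{1}{y^0}\iint A(t,y,z,y',z')\,f(t,z)\,d\omega\,\frac{dz}{R^3(t)z^0}.
\]
Since $A\le C$, $R\ge C$, $y^0\ge 1$, $z^0\ge 1$ and $\int_{\bbs^2}d\omega=4\pi$, one gets $0\le\nu[f](t,y)\le\frac{4\pi}{C^2}\|f(t)\|_{L^1}$. Using the collisional change of variables $(y,z)\mapsto(y',z')$ (at fixed $\omega$), which preserves the measure $\frac{dy\,dz}{y^0z^0}$ on the pair of mass shells — the covariant Robertson--Walker analogue of the Lorentz invariance of $\frac{dq}{q^0}$ — one obtains $\|Q^+(f,f)(t)\|_{L^1}\lesssim\|f(t)\|_{L^1}^2$ and likewise $\|Q^-(f,f)(t)\|_{L^1}\lesssim\|f(t)\|_{L^1}^2$. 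Bilinearity of $Q$ then gives the Lipschitz bound $\|Q(f,f)(t)-Q(h,h)(t)\|_{L^1}\lesssim(\|f(t)\|_{L^1}+\|h(t)\|_{L^1})\|f(t)-h(t)\|_{L^1}$.

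Next I would rewrite the mild equation \eqref{boltzmann equation with A mild}, for each fixed $y$, as the linear ODE $\frac{d}{dt}f=Q^+(f,f)-f\,\nu[f]$ solved by
\[
f(t,y)=f_0(y)e^{-\int_0^t\nu[f](\tau,y)\,d\tau}+\int_0^te^{-\int_s^t\nu[f](\tau,y)\,d\tau}\,Q^+(f,f)(s,y)\,ds=:\Psi[f](t,y).
\]
Using $|e^{-a}-e^{-b}|\le|a-b|$ for $a,b\ge 0$ together with the estimates above, $\Psi$ maps the ball of radius $2\|f_0\|_{L^1}$ in $C([0,T];L^1(\bbr^3))$ into itself and is a contraction there, for $T$ depending only on $\|f_0\|_{L^1}$ and $C$; its unique fixed point is a local mild solution, continuous in $t$ with values in $L^1$. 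Nonnegativity then follows by iteration: starting from $f^{(0)}=f_0\ge 0$, if $f^{(n)}\ge 0$ then $\nu[f^{(n)}]\ge 0$ and $Q^+(f^{(n)},f^{(n)})\ge 0$, so $f^{(n+1)}=\Psi[f^{(n)}]\ge 0$; since $f^{(n)}\to f$ in $C([0,T];L^1)$ by the contraction property, $f\ge 0$.

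Finally, pairing the equation with the weight $1$ and using the symmetry assumption \eqref{assumption on A 3} and the same measure-preserving change of variables yields $\frac{d}{dt}\int_{\bbr^3}f(t,y)\,dy=\int_{\bbr^3}Q(f,f)(t,y)\,dy=0$, so $\|f(t)\|_{L^1}=\|f_0\|_{L^1}$ (in particular $\le\|f_0\|_{L^1}$) on the interval of existence. Because the local existence time depends only on this conserved $L^1$ norm, the solution can be restarted from $t=T$ with data $f(T)$ and extended by at least the same time step, and iterating reaches $[0,\infty)$, giving $f\in C([0,\infty);L^1(\bbr^3))$; uniqueness is obtained by applying the bilinear Lipschitz estimate and Gronwall's inequality to $\|f(t)-h(t)\|_{L^1}$ for two solutions with the same data. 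The one genuinely geometric step, and the one I expect to require the most care, is verifying the collisional change of variables preserving $\frac{dy\,dz}{y^0z^0}$ in the covariant Robertson--Walker variables, since it is what makes both the $L^1$ bound on $Q^+$ and the mass conservation work; once the kernel is known to be bounded and $R$ bounded below, everything else is routine.
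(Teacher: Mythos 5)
Your proposal is correct, and I have no substantive objections: the paper itself does not prove this statement but quotes it from \cite{NT06}, and your argument (contraction in $C([0,T];L^1)$ using only $0\le A\le C$ and $R\ge C$, the exponential/Duhamel form for nonnegativity, mass conservation via the symmetry of $A$ and the measure-preserving collisional change of variables, then continuation) is essentially the standard proof used there and presupposed in Section~\ref{Sec bounded kernels}. The one step you rightly single out, the invariance of $\frac{dy\,dz}{y^0z^0}$ under the collision map, is exactly what the paper later establishes in Lemma~\ref{Lem change of variables}, so nothing is missing.
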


In the above result the scale factor $R(t)$ is only assumed to be bounded from
below, hence Theorem \ref{theorem NT06} includes Theorem \ref{theorem NT03}.
A similar result has been obtained in \cite{NDT05}
in a little more general spacetime, which is a Bianchi Type I spacetime,
and finally the Einstein-Boltzmann system has been studied in \cite{ND06}.
The solution space used in \cite{ND06} is a weighted $L^1$ space, which
corresponds to $L^1_1(\bbr^3)$ in the notation of
the present paper. If we combine the arguments and proofs of
\cite{ND06,NDT05,NT06}, the following theorem is obtained.
\begin{proposition}\label{Prop existence for truncated}
Consider the Boltzmann equation \eqref{boltzmann equation with A mild}
in the Robertson-Walker spacetime. Suppose that the collision kernel $A$
satisfies \eqref{assumption on A 1}--\eqref{assumption on A 3},
and let $f_0\in L^1_1(\bbr^3)$ with $f_0\geq 0$ be given.
Then, the Boltzmann equation
has a unique global solution $f\in C([0,\infty);L^1_1(\bbr^3))$
with $f(t)\geq 0$ satisfying
\[
\sup_{t\in[0,\infty)}\|f(t)\|_{L^1_1}\leq\|f_0\|_{L^1_1}.
\]
\end{proposition}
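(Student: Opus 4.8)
The statement is essentially a refinement of Theorem \ref{theorem NT06} in which the solution space $L^1(\bbr^3)$ is replaced by the weighted space $L^1_1(\bbr^3)$, with the same existence, uniqueness, positivity, and $L^1_1$-bound. The plan is to run the standard fixed-point / iteration argument that proves Theorem \ref{theorem NT06}, but to carry along the weight $\langle y\rangle$ throughout. First I would set up the iteration: define $f^{(0)}(t,y)=f_0(y)$ and let $f^{(n+1)}$ solve the linear mild equation obtained from \eqref{boltzmann equation with A mild} by putting $f^{(n)}$ into the gain term, splitting the collision operator into gain and loss parts $Q=Q^+-Q^-$ and treating the loss term as a linear damping along the iteration (the usual Kaniel–Shinbrot-type or monotone scheme used in \cite{NT06,ND06}). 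Positivity of $f^{(n)}$ for all $n$ follows as in \cite{NT06} because the loss term is linear and nonnegative and the gain term is nonnegative. The point is then to obtain, uniformly in $n$, a bound in $L^1_1$ rather than merely $L^1$.

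To get the $L^1_1$ estimate I would multiply the equation for $f^{(n+1)}$ by $\langle y\rangle$ and integrate over $y\in\bbr^3$. Because the weight used is the covariant energy weight $\langle y\rangle=\sqrt{1+|y|^2}$ and the collision conserves energy-momentum (in the covariant variables, $v_\alpha+u_\alpha=v'_\alpha+u'_\alpha$, which is exactly the structure built into \eqref{boltzmann equation in RW} and carried over to \eqref{boltzmann equation with A mild} via the change of variables to $y$, $z$), the gain and loss contributions combine so that the weighted collision integral does not produce a growing term: by the pre-post symmetry \eqref{assumption on A 3} of $A$ and the usual change of variables $(y,z,\omega)\mapsto(y',z',\omega)$, the map $g\mapsto \int Q(g,g)\langle y\rangle\,dy$ controls $\|g\|_{1,1}$ from above without a positive feedback term. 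Concretely, I expect the estimate to take the form
\[
\frac{d}{dt}\|f^{(n+1)}(t)\|_{1,1}\le 0
\]
or at worst $\frac{d}{dt}\|f^{(n+1)}(t)\|_{1,1}\le C\|f^{(n)}(t)\|_{1,1}\|f^{(n+1)}(t)\|_{1,1}$ with the gain term bounded using $\langle y'\rangle+\langle z'\rangle=\langle y\rangle+\langle z\rangle$ (covariant energy additivity) together with $0\le A\le C$ and $R(t)\ge C$; the boundedness of $A$ and of $R^{-1}$ means the angular and $z$-integrations are harmless. This yields $\sup_{t}\|f^{(n)}(t)\|_{1,1}\le\|f_0\|_{1,1}$ for all $n$, as in the proposition.

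With the uniform $L^1_1$ bound in hand, the remaining steps are routine and parallel \cite{NT06}: the iterates are shown to be monotone (or Cauchy) in $L^1$ on each finite time interval, so $f^{(n)}\to f$ in $C([0,T];L^1(\bbr^3))$; the uniform $L^1_1$ bound then passes to the limit by Fatou, giving $f\in L^\infty_{\mathrm{loc}}([0,\infty);L^1_1)$, and a further short argument (equicontinuity of $t\mapsto\|f^{(n)}(t)\|_{1,1}$ coming from the differential inequality above, plus the $L^1$ convergence) upgrades this to $f\in C([0,\infty);L^1_1(\bbr^3))$. Uniqueness is obtained by the same Gronwall estimate applied to the difference of two solutions, now measured in $L^1_1$, again using $0\le A\le C$, $R\ge C$, and the energy additivity of the weight; global existence follows since the a priori bound $\|f(t)\|_{1,1}\le\|f_0\|_{1,1}$ prevents blow-up. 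The main obstacle — really the only nontrivial point beyond bookkeeping — is the weighted collision estimate: one must check that introducing the weight $\langle y\rangle$ does not break the cancellation between gain and loss, i.e. that $\langle y'\rangle+\langle z'\rangle=\langle y\rangle+\langle z\rangle$ holds for the covariant parametrization and that the change of variables $(y,z)\mapsto(y',z')$ (for fixed $\omega$) has controlled Jacobian, exactly as in the derivation leading to \eqref{boltzmann equation with A mild}. Once that is in place, everything else is a direct transcription of the boundedness-kernel arguments of \cite{ND06,NDT05,NT06} with $\|\cdot\|_{L^1}$ replaced by $\|\cdot\|_{1,1}$.
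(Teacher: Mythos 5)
The paper gives no proof of this proposition --- it is quoted as following from a combination of the arguments of \cite{ND06,NDT05,NT06} --- so you are supplying a proof the authors omit, and your skeleton (the iteration scheme of \cite{NT06}, positivity from the monotone structure, a Gr\"onwall argument for uniqueness, Fatou to pass the weighted bound to the limit) is the right one. The problem is the step you yourself single out as the only nontrivial point: the claimed identity $\langle y'\rangle+\langle z'\rangle=\langle y\rangle+\langle z\rangle$. This is false in the Robertson--Walker spacetime once $R(t)>1$. A collision at time $t$ conserves the covariant spatial momentum $y+z$ and the energy $y^0+z^0$ with $y^0=\sqrt{1+R^{-2}(t)|y|^2}$; it does \emph{not} conserve $\langle y\rangle+\langle z\rangle=\sqrt{1+|y|^2}+\sqrt{1+|z|^2}$, which is the energy computed with the metric at $t=0$ (the two weights coincide only because $R(0)=1$, which is why the identity does hold in the Minkowski case $R\equiv1$). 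To see the failure, note that for $R$ large the conservation laws degenerate to the Newtonian ones $y'+z'=y+z$, $|y'|^2+|z'|^2=|y|^2+|z|^2$, and with $y=(1,0,0)$, $z=0$, $y'=(\tfrac12,\tfrac12,0)$, $z'=(\tfrac12,-\tfrac12,0)$ one gets $\langle y'\rangle+\langle z'\rangle=2\sqrt{3/2}>\sqrt{2}+1=\langle y\rangle+\langle z\rangle$. So the weighted collision integral does not cancel as you assert, and $\frac{d}{dt}\|f^{(n+1)}(t)\|_{1,1}\le 0$ does not follow from your argument.

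The repair is exactly the device the authors use later when they need weighted moments in the Robertson--Walker case (Section \ref{Sec hard potential in RW}): work with the time-dependent weight $y^0(t)=\sqrt{1+R^{-2}(t)|y|^2}$, i.e.\ with $|f|_{1,r}=\int f\,(y^0)^r\,dy$ rather than $\|f\|_{1,r}$. For this weight the additivity $y'^0+z'^0=y^0+z^0$ is exact, so the collision integral against $y^0$ vanishes after symmetrization (using the symmetry \eqref{assumption on A 3} and the change of variables of Lemma \ref{Lem change of variables}), and the extra term $\int f\,\partial_t(y^0)\,dy$ produced by differentiating the weight is nonpositive because $\dot R\ge0$. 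One then compares the two norms via $y^0\le\langle y\rangle\le R(t)\,y^0$, as in \eqref{equivalence between norms}--\eqref{equivalence between weight functions}. You should recast your a priori estimate in these terms; note that this route yields $\sup_t|f(t)|_{1,1}\le\|f_0\|_{1,1}$ directly, and you then need to say explicitly how the stated bound on $\|f(t)\|_{1,1}$ is recovered (the naive comparison costs a factor $R(t)$), which is a point to be addressed head-on rather than hidden behind the false conservation law.
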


\section{Existence results for the hard potential cases}
\label{Sec hard potential}\setcounter{equation}{0}
In this section the existence result of Proposition
\ref{Prop existence for truncated} will be extended
to the hard potential case.
Proposition \ref{Prop existence for truncated} shows that the Boltzmann
equation \eqref{boltzmann equation with A mild} has a solution in the weighted
function space $L^1_1(\bbr^3)$ with the weight function $\langle y\rangle$. By
assuming that the collision kernel $A$ is independent of $t$
and the scale factor satisfies $R(t)\equiv 1$, we obtain the existence result
in $L^1_1(\bbr^3)$ for the equation \eqref{boltzmann equation with S mild}.
This will be extended to the hard potential case in Section
\ref{Sec hard potential in M}.
The existence result for the Robertson-Walker case will be studied in
Section \ref{Sec hard potential in RW}.

\subsection{Preliminaries}
In this section we collect several lemmas.

\begin{lemma}\label{Lem elementary inequalities}
The following inequalities hold in the Robertson-Walker spacetime:
\begin{gather}
g\leq\sqrt{s},\\
g\leq 2\sqrt{p^0q^0},\\
\sqrt{s}\leq 2\sqrt{p^0q^0},\\
g\leq R(t)|p-q|.
\end{gather}
\end{lemma}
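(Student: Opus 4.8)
The plan is to establish each of the four inequalities in Lemma \ref{Lem elementary inequalities} by direct computation from the definitions \eqref{s and g}, keeping careful track of which metric (Minkowski or Robertson-Walker) lowers indices, and exploiting the conservation laws and the mass shell condition. Since $s = 4 + g^2$ (as noted after \eqref{scattering kernel}), the first inequality $g \le \sqrt{s}$ is immediate: $g^2 \le 4 + g^2 = s$. This one requires no work beyond citing the identity, so the real content is in the remaining three.

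For the third inequality $\sqrt{s} \le 2\sqrt{p^0 q^0}$, I would expand $s = -(p_\alpha + q_\alpha)(p^\alpha + q^\alpha) = -p_\alpha p^\alpha - q_\alpha q^\alpha - 2 p_\alpha q^\alpha = 2 - 2 p_\alpha q^\alpha$ using the mass shell conditions $p_\alpha p^\alpha = q_\alpha q^\alpha = -1$. Writing out $p_\alpha q^\alpha = -p^0 q^0 + p \cdot q$ (in the appropriate metric), one gets $s = 2 + 2 p^0 q^0 - 2 p \cdot q$. The Cauchy-Schwarz-type bound $|p \cdot q| \le |p||q| \le \sqrt{(p^0)^2 - 1}\sqrt{(q^0)^2-1} \le p^0 q^0$ (using $|p|^2 \le R^2|p|^2 \cdot R^{-2}\cdots$; more carefully, $(p^0)^2 = 1 + R^2|p|^2$ so $R^2|p|^2 = (p^0)^2 - 1$, and similarly the covariant version) then gives $s \le 2 + 2p^0q^0 + 2 p^0 q^0 \le 4 p^0 q^0$ after noting $p^0 q^0 \ge 1$, hence $\sqrt s \le 2\sqrt{p^0 q^0}$. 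The second inequality $g \le 2\sqrt{p^0 q^0}$ then follows by combining the first and third. The main subtlety here is bookkeeping: making sure the spatial inner product appearing in $p_\alpha q^\alpha$ is correctly normalized relative to $|p|$, $|q|$ in whichever metric is being used, so that the Cauchy-Schwarz step is legitimate.

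For the last inequality $g \le R(t)|p - q|$, I would compute $g^2 = (p_\alpha - q_\alpha)(p^\alpha - q^\alpha) = p_\alpha p^\alpha + q_\alpha q^\alpha - 2 p_\alpha q^\alpha = -2 - 2p_\alpha q^\alpha = -2 + 2p^0q^0 - 2 p\cdot q$ (again with the metric-appropriate spatial term), which is the square of the relative momentum. Writing $2p^0 q^0 - 2p\cdot q - 2 = (p^0 - q^0)^2 \cdot(\text{something}) $ is not quite right, so instead I would use $(p^0)^2 = 1 + R^2|p|^2$, $(q^0)^2 = 1 + R^2|q|^2$, and the identity $(p^0 q^0)^2 = (1+R^2|p|^2)(1+R^2|q|^2) \ge (1 + R^2 p\cdot q)^2$ when $1 + R^2 p \cdot q \ge 0$, which by Cauchy-Schwarz gives $p^0 q^0 \ge 1 + R^2 p\cdot q$, hence $g^2 = 2(p^0 q^0 - 1 - p\cdot q) \le 2(R^2 p\cdot q - p \cdot q) $ — wait, this needs the spatial term in $p_\alpha q^\alpha$ to actually be $R^2 p\cdot q$ in the contravariant picture. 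Assuming that normalization, $g^2 \le 2 R^2|p||q| - 2p\cdot q + \cdots$; the cleanest route is to show $p^0 q^0 - 1 \le R^2 |p||q| + \tfrac12 R^2|p-q|^2 - R^2 p\cdot q$ type estimate and collect terms to land on $g^2 \le R^2|p-q|^2$. I expect this fourth inequality to be the main obstacle, precisely because one must handle the $+1$'s coming from the rest mass carefully — the naive bound $p^0 q^0 \ge 1 + R^2 p \cdot q$ combined with $g^2 = 2(p^0q^0 - 1 - \text{(spatial)})$ should close it, but verifying the direction of every inequality when $p\cdot q$ can be negative requires care. Once the index-lowering conventions are pinned down, each step is a short algebraic manipulation plus one application of Cauchy-Schwarz.
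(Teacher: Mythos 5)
The first two inequalities are fine ($s=4+g^2$ gives the first, and the second follows from the first and third), but the other two are not actually established as written. For the third inequality, your chain bounds $R^2|p\cdot q|$ by $\sqrt{((p^0)^2-1)((q^0)^2-1)}\le p^0q^0$, which yields $s\le 2+2p^0q^0+2p^0q^0=2+4p^0q^0$; the concluding step ``$\le 4p^0q^0$ after noting $p^0q^0\ge 1$'' is simply false, since $2+4x\le 4x$ never holds. The bound you need is the sharper $\sqrt{(a^2-1)(b^2-1)}\le ab-1$ for $a,b\ge 1$ (equivalent to $a^2+b^2\ge 2ab$), which gives $-2R^2(p\cdot q)\le 2(p^0q^0-1)$ and hence $s\le 4p^0q^0$ exactly. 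This is what the paper does: it applies $-2(p\cdot q)\le |p|^2+|q|^2$ under the square root and recognizes $1+R^2(|p|^2+|q|^2)+R^4|p|^2|q|^2=(p^0q^0)^2$. So the third inequality has a genuine arithmetic gap whose repair requires keeping the $-1$'s that your Cauchy--Schwarz step throws away.

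For the fourth inequality you do not reach a proof: the route through $g^2=2(p^0q^0-1-R^2\,p\cdot q)$ together with $p^0q^0\ge 1+R^2\,p\cdot q$ only gives $g^2\ge 0$, as you notice, and the ``collect terms'' step is left open. The direct argument is to expand $g^2$ on the difference vector itself rather than via $p_\alpha q^\alpha$: with the Robertson--Walker metric $\mathrm{diag}(-1,R^2,R^2,R^2)$ one has
\begin{equation*}
g^2=(p_\alpha-q_\alpha)(p^\alpha-q^\alpha)=-(p^0-q^0)^2+R^2(t)|p-q|^2\le R^2(t)|p-q|^2,
\end{equation*}
with nothing further to check; this is the paper's one-line proof. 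Your sketched route can in fact be closed --- the missing ingredient is exactly $2p^0q^0\le (p^0)^2+(q^0)^2$, i.e.\ AM--GM once more, after which $2p^0q^0-2-2R^2\,p\cdot q\le R^2(|p|^2+|q|^2)-2R^2\,p\cdot q=R^2|p-q|^2$ --- but as submitted the step is missing, and the detour obscures what is really a trivial consequence of dropping the nonpositive term $-(p^0-q^0)^2$.
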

\begin{proof}
Since $s=g^2+4$, we obtain the first inequality.
The second inequality is given by the first and the third ones.
For the third inequality we notice that
\begin{align*}
s&=2+2p^0q^0-2R^2(t)(p\cdot q)\\
&=2p^0q^0+2\sqrt{1-2R^2(t)(p\cdot q)+R^4(t)(p\cdot q)^2}\\
&\leq 2p^0q^0+2\sqrt{1+R^2(t)(|p|^2+|q|^2)+R^4(t)|p|^2|q|^2}\\
&=4p^0q^0,
\end{align*}
and this proves the third inequality.
The last inequality is also clear because
\[
g^2=-(p^0-q^0)^2+R^2(t)|p-q|^2
\leq R^2(t)|p-q|^2,
\]
and this completes the proof.
\end{proof}

\begin{lemma}\label{Lem tt is greater than s}
For any $\omega\in \bbs^2$, we have $t_\alpha t^\alpha\geq R^2(t)s$ in the
Robertson-Walker spacetime.
\end{lemma}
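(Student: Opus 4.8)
The plan is to write $t_\alpha t^\alpha$ out in components and to reduce the inequality to a single application of the Cauchy--Schwarz inequality. Here $t^\alpha$ is constructed from $n_\alpha := p_\alpha + q_\alpha$ (equivalently $v_\alpha + u_\alpha$ in the covariant variables of \eqref{boltzmann equation in RW}) by $t^0 = n_i\omega^i$ and $t^j = -n_0\,\omega^j$ for $j=1,2,3$, where $\omega\in\bbs^2$. Lowering indices with the Robertson--Walker metric $\mbox{diag}(-1,R^2(t),R^2(t),R^2(t))$ and using $\sum_{j}(\omega^j)^2 = 1$, I would first obtain
\[
t_\alpha t^\alpha \;=\; -(t^0)^2 + R^2(t)\sum_{j=1}^3 (t^j)^2 \;=\; R^2(t)\,n_0^2 - \Big(\sum_{i=1}^3 n_i\omega^i\Big)^{2}.
\]

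Next I would apply Cauchy--Schwarz in the form $\big(\sum_i n_i\omega^i\big)^2 \le \big(\sum_i n_i^2\big)\big(\sum_i (\omega^i)^2\big) = \sum_i n_i^2$, which gives
\[
t_\alpha t^\alpha \;\ge\; R^2(t)\,n_0^2 - \sum_{i=1}^3 n_i^{2}.
\]
It then remains to recognize the right-hand side as $R^2(t)s$. From the definition \eqref{s and g}, $s = -(p_\alpha+q_\alpha)(p^\alpha+q^\alpha) = -n_\alpha n^\alpha$; raising indices with $\mbox{diag}(-1,R^{-2}(t),R^{-2}(t),R^{-2}(t))$ yields $-n_\alpha n^\alpha = n_0^2 - R^{-2}(t)\sum_i n_i^2$, hence $R^2(t)s = R^2(t)n_0^2 - \sum_i n_i^2$, and the lemma follows. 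The Minkowski case ($R\equiv 1$) drops out of the same computation.

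I do not expect any genuine obstacle: the argument is a short direct calculation, and the only point requiring care is the bookkeeping of the factors of $R(t)$ produced by the metric when lowering and raising the spatial indices of $n^\alpha$ and $t^\alpha$. It is perhaps worth noting in passing that the same chain of inequalities gives $t_\alpha t^\alpha \ge R^2(t)s \ge 4R^2(t) > 0$ (since $s = g^2+4$), so that $t^\alpha$ is spacelike and the square roots $\sqrt{t_\beta t^\beta}$ appearing in \eqref{parametrization for p' and q'} are well defined.
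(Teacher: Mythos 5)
Your proof is correct and is essentially the paper's own argument: a direct component computation of $t_\alpha t^\alpha$ followed by the Cauchy--Schwarz bound $(n_i\omega^i)^2\le \sum_i n_i^2$ (the paper writes the same step in contravariant components as $(n\cdot\omega)^2\le|n|^2$) and the identification of the remainder with $R^2(t)s=R^2(t)n_0^2-\sum_i n_i^2$. The closing observation that this also shows $t^\alpha$ is spacelike, so $\sqrt{t_\beta t^\beta}$ is well defined, is a correct and worthwhile aside.
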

\begin{proof}
In the Robertson-Walker spacetime we have
\[
t^\alpha=(n_i\omega^i,-n_0\omega)=(R^2(t)(n\cdot\omega),n^0\omega),
\]
and the proof is a direct calculation:
\begin{align*}
t_\alpha t^\alpha &=-R^4(t)(n\cdot\omega)^2+R^2(t)(n^0)^2
\geq -R^4(t)|n|^2+R^2(t)(n^0)^2\\
&=R^2(t)((n^0)^2-R^2(t)|n|^2)=R^2(t)s.
\end{align*}
This completes the proof.
\end{proof}

The following lemma is a well-known fact in the special relativistic case
\cite{GS91}, and we show that the lemma holds in general relativistic cases
also. The lemma is proved by the same argument as in \cite{GS91}, but we
present it for the reader's convenience.

\begin{lemma}\label{Lem change of variables}
Let $(p'^\alpha,q'^\alpha)$ and $(p^\alpha,q^\alpha)$ be pre- and post-collisional
momenta respectively, and consider the collision map
$(p^\alpha,q^\alpha)\to(p'^\alpha,q'^\alpha)$. Then, the Jacobian is given by
\[
\frac{\partial(p',q')}{\partial(p,q)}=-\frac{p'_0q'_0}{p_0q_0}.
\]
\end{lemma}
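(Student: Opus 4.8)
I would compute the Jacobian of the collision map by splitting it into a composition of simpler changes of variables, following the classical Glassey–Strauss computation but carrying the Robertson-Walker metric factors along. Write $n^\alpha = p^\alpha + q^\alpha$ and the relative momentum vector $r^\alpha = p^\alpha - q^\alpha$; since $p^\alpha = \tfrac12(n^\alpha + r^\alpha)$ and $q^\alpha = \tfrac12(n^\alpha - r^\alpha)$, the map $(p,q)\mapsto(n,r)$ has constant Jacobian, and the same is true for $(p',q')\mapsto(n',r') = (n, gt^\alpha/\sqrt{t_\beta t^\beta})$ because energy-momentum conservation gives $n' = n$. So the nontrivial content is the Jacobian of $r \mapsto r'$ at fixed $n$, where $r'$ is the vector of length $g = \sqrt{r_\alpha r^\alpha}$ pointing along the direction determined by $\omega$ and $n$. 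Here I must remember that $r$, $r'$ are spatial three-vectors (the zeroth components being determined), so the factors $\sqrt{|\eta|}$ or $R^{\pm 3}$ that convert $dq$ to $du$ etc. are already accounted for by working consistently with either contravariant spatial or covariant spatial variables.

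**Key steps.** First, pass to the variables $(n, r)$ and $(n, r')$, recording the (metric-independent) numerical Jacobians of those linear maps. Second, for the map $r \mapsto r'$ at fixed $n$: parametrize using $g$ and the direction; the direction of $r'$ is $t^\alpha/\sqrt{t_\beta t^\beta}$ with $t^\alpha = (n_i\omega^i, -n_0\omega)$, which depends on $n$ (fixed) and $\omega$ but not on $r$, while $|r'| $ encodes $g$. Thus, roughly, the dependence of $r'$ on $r$ comes only through the scalar $g = g(r)$ and — because $r'^0$ is constrained by the mass shell in the primed frame while $r^0$ is constrained in the unprimed frame — through the way the spatial part of $r'$ absorbs the shift in the zeroth component. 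The cleanest route is to use the mass-shell/energy-momentum constraints to write the surface measures: the map $(p,q)\mapsto(p',q')$ is a diffeomorphism of the product of two mass shells preserving $n^\alpha$ and $g$, and the ratio of the induced Lebesgue measures $dp\,dq$ and $dp'\,dq'$ is computed by comparing the Leray/Dirac representations $\delta(p_\alpha p^\alpha + 1)\,d^4p$, giving the factor $p'_0 q'_0/(p_0 q_0)$, with the sign coming from orientation. Third, assemble the pieces; the metric factors cancel because $p^0 = v^0$ and the constraint surfaces are described identically in contravariant or covariant spatial coordinates.

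**Main obstacle.** The genuinely delicate point is tracking the zeroth-component constraints correctly: $p_0$ is a function of $p$ through the mass shell with the Robertson-Walker metric, and similarly $p'_0$ through $p'$, so the "off-diagonal" contribution of $\partial p'^i/\partial p^j$ from differentiating the mass-shell relation is where the factor $p'_0 q'_0/(p_0 q_0)$ is born rather than being an artifact of the flat case. I expect to verify this by computing the $6\times 6$ determinant via the block structure $(n,r)\to(n,r')$, using Lemma \ref{Lem tt is greater than s} to guarantee $t_\beta t^\beta > 0$ so the direction vector is well-defined, and reducing to a $3\times 3$ determinant for $r\mapsto r'$; the remaining algebra is the same as in \cite{GS91} once one checks that every appearance of the Minkowski metric can be replaced by the Robertson-Walker metric without changing the structure of the determinant, which is exactly what the principle of general covariance (via the orthonormal-frame formulation) predicts.
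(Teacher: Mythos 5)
There is a genuine gap, and it sits at the heart of your plan. With the parametrization \eqref{parametrization for p' and q'}, the spatial part of $p'^\alpha-q'^\alpha$ equals $g\,t/\sqrt{t_\beta t^\beta}=-g\,n_0\,\omega/\sqrt{t_\beta t^\beta}$, which is \emph{always parallel to the fixed vector $\omega$}. So in your decomposition the map $r\mapsto r'$ at fixed $n$ and fixed $\omega$ has the form $r'=\lambda(r)\,\omega$ for a scalar function $\lambda$: it is a rank-one map, its $3\times3$ Jacobian is identically zero, and the whole fixed-$\omega$ collision map in this parametrization is not a diffeomorphism (its image is the codimension-two set $\{p'-q'\parallel\omega\}$). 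Your own observation that ``the dependence of $r'$ on $r$ comes only through the scalar $g$'' is precisely the symptom of this degeneracy, not a simplification. The paper avoids this by explicitly switching to a different, reflection-type parametrization,
\[
p'^\alpha=p^\alpha+2\frac{t_\beta q^\beta}{t_\gamma t^\gamma}t^\alpha,\qquad
q'^\alpha=q^\alpha-2\frac{t_\beta q^\beta}{t_\gamma t^\gamma}t^\alpha,
\]
for which the fixed-$\omega$ map is an involution of the product of mass shells; it then computes the $6\times6$ determinant directly in the $(p,q)$ block structure (reducing to $1+(\partial_{p^i}A-\partial_{q^i}A)\omega^i$ with $A=-2\,t_\beta q^\beta\,n_0/(t_\gamma t^\gamma)$) and evaluates $(\partial_{p^j}A)\omega^j$, $(\partial_{q^j}A)\omega^j$ by differentiating the conserved energy $p'^0+q'^0=p^0+q^0$. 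That step, not the mass-shell bookkeeping, is where the factor $-p'_0q'_0/(p_0q_0)$ comes from.

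Your fallback via Leray/Dirac measures $\delta(p_\alpha p^\alpha+1)\,d^4p=dp/(2p^0)$ does not close the gap either: that identity expresses Lorentz (or frame) invariance of the mass-shell measure, whereas the collision map at fixed $\omega$ is not a Lorentz transformation. One can prove invariance of the full collision measure $\frac{dp}{p^0}\frac{dq}{q^0}\frac{dp'}{p'^0}\frac{dq'}{q'^0}\,\delta^{(4)}(p+q-p'-q')\,\delta(\ldots)$ under swapping primed and unprimed variables, but converting that into the stated Jacobian of a fixed-$\omega$ map requires computing the Jacobian of the $\omega$-parametrization itself, which is exactly the computation being avoided. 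The parts of your plan that are sound --- carrying the Robertson--Walker metric through every step, noting that $t_\beta t^\beta>0$ by Lemma \ref{Lem tt is greater than s} so the construction is well defined, and observing that the algebra of \cite{GS91} survives the replacement of the Minkowski metric --- are indeed what the paper does; but they must be applied to the reflection parametrization, not to \eqref{parametrization for p' and q'}.
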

\begin{proof}
To prove the lemma, we use a different parametrization from
\eqref{parametrization for p' and q'}.
\[
p'^\alpha=p^\alpha+2\frac{t_\beta q^\beta}{t_\gamma t^\gamma}t^\alpha,\quad
q'^\alpha=q^\alpha-2\frac{t_\beta q^\beta}{t_\gamma t^\gamma}t^\alpha,
\]
where $t^\alpha$ is the same as in \eqref{parametrization for p' and q'}.
For convenience we write
\[
p'^k=p^k+A\omega^k,\quad q'^k=q^k-A\omega^k,\quad A
=-2\frac{t_\beta q^\beta}{t_\gamma t^\gamma}n_0.
\]
By the same calculations as in \cite{GS91} we obtain
\begin{align}
\frac{\partial(p',q')}{\partial(p,q)}&=\det\left(
\begin{array}{cc}
\delta_j^i+(\partial_{p^j}A)\omega^i & (\partial_{q^j}A)\omega^i\\
-(\partial_{p^j}A)\omega^i & \delta_j^i-(\partial_{q^j}A)\omega^i
\end{array}
\right)\cr
&=\det\left(
\delta_j^i+(\partial_{p^j}A-\partial_{q^j}A)\omega^i
\right)\cr
&=1+(\partial_{p^i}A-\partial_{q^i}A)\omega^i.\label{determinant 1}
\end{align}
We differentiate the conserved energy
\[
p'^0+q'^0=p^0+q^0
\]
with respect to $p^j$, and multiply $\omega^j$ to obtain
\begin{align*}
&\left(-\frac{p'_k}{p'_0}\frac{\partial p'^k}{\partial p^j}
-\frac{q'_k}{q'_0}\frac{\partial q'^k}{\partial p^j}\right)\omega^j
=-\frac{p_j}{p_0}\omega^j\\
&\Longleftrightarrow\quad
\left(\frac{p_j}{p_0}-\frac{p'_j}{p'_0}\right)\omega^j
=\left(\frac{p'_k}{p'_0}-\frac{q'_k}{q'_0}\right)
\omega^k(\partial_{p^j}A)\omega^j.
\end{align*}
Similarly we obtain
\[
\left(\frac{q_j}{q_0}-\frac{q'_j}{q'_0}\right)\omega^j
=\left(\frac{p'_k}{p'_0}-\frac{q'_k}{q'_0}\right)
\omega^k(\partial_{q^j}A)\omega^j.
\]
Hence, \eqref{determinant 1} is given by
\begin{equation}\label{determinant 2}
\frac{\partial(p',q')}{\partial(p,q)}
=\left(\frac{p'_k}{p'_0}\omega^k-\frac{q'_k}{q'_0}\omega^k\right)^{-1}
\left(\frac{p_j}{p_0}\omega^j-\frac{q_j}{q_0}\omega^j\right).
\end{equation}
Recall that $n^\alpha=p^\alpha+q^\alpha$ and $t^\alpha=(n_j\omega^j,-n_0\omega)$
for $\omega\in\bbs^2$, and then the above quantities are written as follows:
\begin{align}
\frac{p_j}{p_0}\omega^j-\frac{q_j}{q_0}\omega^j
=\frac{1}{p_0q_0}(q_0n_j-n_0q_j)\omega^j
=\frac{1}{p_0q_0}q_\alpha t^\alpha.\label{determinant 3}
\end{align}
Similarly we obtain
\begin{align}
\frac{p'_k}{p'_0}\omega^k-\frac{q'_k}{q'_0}\omega^k
&=\frac{1}{p'_0q'_0}(q'_0p'_k+q'_0q'_k-q'_0q'_k-p'_0q'_k)\omega^k\cr
&=\frac{1}{p'_0q'_0}(q'_0n_k-n_0q'_k)\omega^k
=\frac{1}{p'_0q'_0}q'_\alpha t^\alpha
=-\frac{1}{p'_0q'_0}q_\alpha t^\alpha,\label{determinant 4}
\end{align}
where we used the energy-momentum conservation and the following:
\[
t_\alpha q'^\alpha=t_\alpha q^\alpha
-2\frac{t_\beta q^\beta}{t_\gamma t^\gamma}t_\alpha t^\alpha
=-t_\alpha q^\alpha.
\]
We plug \eqref{determinant 3} and \eqref{determinant 4} into
\eqref{determinant 2}, and this completes the proof.
\end{proof}

\begin{lemma}\label{Lem f'g'-fg to p'+q'-p-q}
For the collision operator the following property holds in the
Robertson-Walker spacetime: for any measurable function $k$ depending only on
$g$, $s$, and $\omega$, we have
\begin{align*}
&\iiint \frac{k(g,s,\omega)}{p^0q^0}(f'f_*'-ff_*)(p^0)^r\,d\omega\,dq\,dp\\
&=\frac{1}{2}\iiint \frac{k(g,s,\omega)}{p^0q^0}ff_*((p'^0)^r+(q'^0)^r-(p^0)^r
-(q^0)^r)\,d\omega\,dq\,dp.
\end{align*}
\end{lemma}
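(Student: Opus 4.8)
The plan is to run the standard weak-formulation argument: exploit the change of variables furnished by the collision map, and then symmetrize in $p\leftrightarrow q$. Write the left-hand side as $G-L$, where $G:=\iiint\frac{k(g,s,\omega)}{p^0q^0}f'f_*'(p^0)^r\,d\omega\,dq\,dp$ is the gain contribution and $L$ is the same integral with $ff_*$ in place of $f'f_*'$.

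The first step is to rewrite $G$ by substituting $(p^\alpha,q^\alpha)\mapsto(p'^\alpha,q'^\alpha)$ in the inner $dq\,dp$ integral, at fixed $\omega$. For this to be a legitimate change of variables I would work with the parametrization used in the proof of Lemma~\ref{Lem change of variables}, namely $p'^\alpha=p^\alpha+2\frac{t_\beta q^\beta}{t_\gamma t^\gamma}t^\alpha$, $q'^\alpha=q^\alpha-2\frac{t_\beta q^\beta}{t_\gamma t^\gamma}t^\alpha$, since for this choice the map is an involution at fixed $\omega$: $t^\alpha$ depends only on $n^\alpha=p^\alpha+q^\alpha$ and $\omega$, and $t_\beta q'^\beta=-t_\beta q^\beta$, so applying the map twice returns $(p^\alpha,q^\alpha)$. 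Under this substitution $g$ and $s$ are collision invariants and $\omega$ is untouched, so $k(g,s,\omega)$ is unchanged; Lemma~\ref{Lem change of variables} supplies the Jacobian factor $\frac{p'_0q'_0}{p_0q_0}$, which equals $\frac{p'^0q'^0}{p^0q^0}$ because $p_0=-p^0$, $q_0=-q^0$, so that $\frac{dq\,dp}{p^0q^0}$ is invariant. Hence the substitution turns $G$ into $\iiint\frac{k(g,s,\omega)}{p^0q^0}ff_*(p'^0)^r\,d\omega\,dq\,dp$, and therefore the left-hand side equals $\iiint\frac{k}{p^0q^0}ff_*\big((p'^0)^r-(p^0)^r\big)\,d\omega\,dq\,dp$.

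The second step is to symmetrize in $p\leftrightarrow q$. Under this interchange $n^\alpha$ and $t^\alpha$, hence $g$, $s$ and the product $p^0q^0$, are all unchanged, $ff_*$ is unchanged, while $p'^\alpha$ and $q'^\alpha$ are interchanged (from the parametrization above and $t_\beta p^\beta=-t_\beta q^\beta$ one gets $p'(q,p,\omega)=q'(p,q,\omega)$). Thus the integral is also equal to $\iiint\frac{k}{p^0q^0}ff_*\big((q'^0)^r-(q^0)^r\big)\,d\omega\,dq\,dp$; averaging the two expressions produces the asserted identity with the prefactor $\tfrac12$.

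The main obstacle is entirely contained in the first step: one must make sure to use a parametrization of the post-collisional momenta for which $(p^\alpha,q^\alpha)\mapsto(p'^\alpha,q'^\alpha)$ is, at fixed $\omega$, an honest diffeomorphism rather than a degenerate (rank-deficient) map — the parametrization \eqref{parametrization for p' and q'} is \emph{not} suitable for this, which is precisely why the alternative parametrization in Lemma~\ref{Lem change of variables} is the right tool — and to track the sign conventions relating $p_0$ to $p^0$ when transporting the Jacobian of Lemma~\ref{Lem change of variables} into the measure $\frac{dq\,dp}{p^0q^0}$. The remaining manipulations (Fubini, and the observation that $k$ depends on $(p,q)$ only through the collision invariants $g$, $s$ and the unaffected parameter $\omega$) are routine.
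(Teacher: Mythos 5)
Your proposal is correct and follows essentially the same route as the paper: apply the change of variables from Lemma~\ref{Lem change of variables} (so that $\frac{dp\,dq}{p^0q^0}$ is invariant and $k(g,s,\omega)$ is unchanged) to convert the gain term into $\iiint\frac{k}{p^0q^0}ff_*(p'^0)^r$, then symmetrize in $p\leftrightarrow q$ and average. Your extra care about which parametrization makes $(p,q)\mapsto(p',q')$ an involutive diffeomorphism at fixed $\omega$ is a valid sharpening of a point the paper leaves implicit, but it does not change the argument.
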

\begin{proof}
We use Lemma \ref{Lem change of variables} to make the change of variables
between pre- and post-collisional momenta as follows:
\begin{equation}
\frac{1}{p^0q^0}dp\,dq=\frac{1}{p'^0q'^0}dp'dq',
\end{equation}
and note that $g$ and $s$ are invariant quantities
under the collision process and symmetric for $p$ and $q$.
Hence, the gain term can be written as
\begin{align*}
\iiint \frac{k(g,s,\omega)}{p^0q^0}f'f_*'(p^0)^r\,d\omega\,dq\,dp	
&=\iiint\frac{k(g,s,\omega)}{p'^0q'^0}f'f_*'(p^0)^r\,d\omega\,dq'dp'\\
&=\iiint\frac{k(g,s,\omega)}{p^0q^0}ff_*(p'^0)^r\,d\omega\,dq\,dp.
\end{align*}
By interchanging $p$ and $q$, it can also be rewritten as
\begin{align*}
\iiint \frac{k(g,s,\omega)}{p^0q^0}f'f_*'(p^0)^r\,d\omega\,dq\,dp	
&=\iiint\frac{k(g,s,\omega)}{p^0q^0}ff_*(q'^0)^r\,d\omega\,dq\,dp.
\end{align*}
Hence, we obtain the following representation for the gain term:
\begin{align*}
\iiint \frac{k(g,s,\omega)}{p^0q^0}f'f_*'(p^0)^r\,d\omega\,dq\,dp	
&=\frac{1}{2}\iiint\frac{k(g,s,\omega)}{p^0q^0}ff_*((p'^0)^r+(q'^0)^r)
\,d\omega\,dq\,dp.
\end{align*}
After applying the same argument to the loss term, we obtain the
desired result.
\end{proof}

\begin{lemma}\label{Lem estimate of G}
Consider the collision process in the Robertson-Walker spacetime.
Let $(p'^\alpha,q'^\alpha)$ and $(p^\alpha,q^\alpha)$ be pre- and post-collisional
momenta respectively. Consider the following quantity for $r>1$:
\[
G=(p'^0)^r+(q'^0)^r-(p^0)^r-(q^0)^r.
\]
Then, $G$ satisfies
\begin{equation}
G\leq C_r((p^0)^{r-1}q^0+p^0(q^0)^{r-1}).
\end{equation}
If $\omega$ is restricted to a subset
$\{\omega\in \bbs^2:|n\cdot\omega|\leq \frac{1}{\sqrt{2}}|n|\}$,
then $G$ satisfies
\begin{equation}
G\leq C_r((p^0)^{r-\frac{1}{2}}(q^0)^{\frac{1}{2}}
+(p^0)^{\frac{1}{2}}(q^0)^{r-\frac{1}{2}})-c_r((p^0)^r+(q^0)^r),
\end{equation}
where $C_r$ and $c_r$ are two different positive constants depending on $r$.
\end{lemma}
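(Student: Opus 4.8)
The plan is to reduce both inequalities to a single bound on the scalar
\[
\xi:=\frac{g}{2}\,\frac{t^0}{\sqrt{t_\beta t^\beta}},
\]
since the parametrization \eqref{parametrization for p' and q'} at the index $\alpha=0$ reads $p'^0=\tfrac{E}{2}+\xi$ and $q'^0=\tfrac{E}{2}-\xi$, where $E:=p^0+q^0=p'^0+q'^0$ by energy--momentum conservation. The first step is therefore to estimate $|\xi|$. In the Robertson--Walker case $t^0=R^2(t)(n\cdot\omega)$, as computed in the proof of Lemma \ref{Lem tt is greater than s}; moreover $\sqrt{t_\beta t^\beta}\ge R(t)\sqrt{s}$ by that same lemma, $g\le\sqrt{s}$ by Lemma \ref{Lem elementary inequalities}, and $R^2(t)|n|^2=(n^0)^2-s=E^2-s\le E^2$ from the computation in the proof of Lemma \ref{Lem tt is greater than s}. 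Since $|n\cdot\omega|\le|n|$, these combine to
\[
|\xi|=\frac{g}{2}\,\frac{R^2(t)|n\cdot\omega|}{\sqrt{t_\beta t^\beta}}\le\frac{g}{\sqrt{s}}\cdot\frac{R(t)|n\cdot\omega|}{2}\le\frac{R(t)|n|}{2}\le\frac{E}{2},
\]
and on the restricted set $\{\omega\in\bbs^2:|n\cdot\omega|\le\tfrac{1}{\sqrt{2}}|n|\}$ to the sharper bound $|\xi|\le\tfrac{E}{2\sqrt{2}}$. In particular $0\le p'^0,q'^0\le E$ always.

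For the first inequality, from $p'^0,q'^0\ge0$, $p'^0+q'^0=E$, and superadditivity of $t\mapsto t^r$ for $r\ge1$ one gets $(p'^0)^r+(q'^0)^r\le E^r$, hence $G\le(p^0+q^0)^r-(p^0)^r-(q^0)^r$. Assuming $p^0\ge q^0$ without loss of generality, the elementary estimate $(a+b)^r-a^r=\int_a^{a+b}rt^{r-1}\,dt\le r(a+b)^{r-1}b\le r2^{r-1}a^{r-1}b$, after discarding $-b^r\le0$, then gives the claim with $C_r=r2^{r-1}$.

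For the second inequality the restriction on $\omega$ is what does the work. The function $h(\xi):=(\tfrac{E}{2}+\xi)^r+(\tfrac{E}{2}-\xi)^r$ is even and nondecreasing on $[0,\tfrac{E}{2}]$, so on the restricted set
\[
(p'^0)^r+(q'^0)^r=h(\xi)\le h\!\Bigl(\tfrac{E}{2\sqrt{2}}\Bigr)=\kappa_r E^r,\qquad \kappa_r:=2^{-r}\!\left[\bigl(1+\tfrac{1}{\sqrt{2}}\bigr)^r+\bigl(1-\tfrac{1}{\sqrt{2}}\bigr)^r\right].
\]
The crucial observation is that $\kappa_r<1$ whenever $r>1$: since $1\pm\tfrac{1}{\sqrt{2}}>0$ and $t\mapsto t^r$ is strictly superadditive for $r>1$, we have $(1+\tfrac{1}{\sqrt{2}})^r+(1-\tfrac{1}{\sqrt{2}})^r<2^r$. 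It then remains to prove the real-variable fact that, for $r>1$ and $\kappa\in(0,1)$, there are constants $c,C>0$ with
\[
\kappa(a+b)^r\le(1-c)(a^r+b^r)+C\bigl(a^{r-1/2}b^{1/2}+a^{1/2}b^{r-1/2}\bigr)\qquad(a,b\ge0).
\]
One may take $c=\tfrac{1-\kappa}{2}$ and $C=\kappa r2^{r-1}$: the cases $b=0$ and $a\le b$ being trivial or symmetric, one divides by $b^r$, sets $x=a/b\ge1$, and uses $(x+1)^r\le x^r+r2^{r-1}x^{r-1}$; the inequality then follows by comparing term by term, since $\kappa x^r\le(1-c)x^r$ and $C x^{r-1}\le C x^{r-1/2}$ for $x\ge1$. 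Applying this with $\kappa=\kappa_r$, $a=p^0$, $b=q^0$ and moving $(p^0)^r+(q^0)^r$ to the other side produces exactly the stated bound, with $c_r=\tfrac{1-\kappa_r}{2}$ as the coefficient of the dissipative term.

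I expect the real difficulty to be conceptual rather than computational: one has to see that confining $\omega$ to directions not too aligned with $n$ keeps the energy transfer $|\xi|$ a fixed factor below its extreme value $\tfrac{1}{2}(p^0+q^0)$, and that this strict gap is exactly what forces $\kappa_r<1$ and so generates the negative, Povzner-type term $-c_r((p^0)^r+(q^0)^r)$. The only other point that requires a little care is correctly assembling the bound on $|\xi|$ from Lemmas \ref{Lem elementary inequalities} and \ref{Lem tt is greater than s} and the identity $R^2(t)|n|^2=(n^0)^2-s$; everything after that is routine one-variable calculus.
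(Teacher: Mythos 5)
Your proof is correct, and while it shares the paper's overall skeleton---write $p'^0=\tfrac{E}{2}+\xi$, $q'^0=\tfrac{E}{2}-\xi$ with $E=p^0+q^0$, and bound the energy transfer $\xi$ via Lemmas \ref{Lem elementary inequalities} and \ref{Lem tt is greater than s}---the mechanism by which you extract each estimate is genuinely different. For the first inequality the paper runs a small extremal argument in $\omega$ (the maximum of $G$ occurs at $\omega=\pm n/|n|$, where $q'^0\le\min\{p^0,q^0\}$) before expanding $(p^0+q^0)^r$; your route via superadditivity, $(p'^0)^r+(q'^0)^r\le E^r$, reaches the same bound more directly and avoids that case analysis. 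For the second inequality the paper uses the restriction on $\omega$ only to establish $p'^0\le\tfrac{E}{2}+\tfrac{g}{2}$, then feeds in $g\le 2\sqrt{p^0q^0}$ to get $p'^0\le\tfrac12(\sqrt{p^0}+\sqrt{q^0})^2$, and the dissipative term comes from the factor $2^{1-r}<1$ in $2(p'^0)^r$; you instead keep the sharper consequence $|\xi|\le\tfrac{E}{2\sqrt2}$, so the gap appears as $\kappa_r=2^{-r}\bigl[(1+\tfrac{1}{\sqrt2})^r+(1-\tfrac{1}{\sqrt2})^r\bigr]<1$, and you then need the additional (correctly proved) real-variable inequality $\kappa(a+b)^r\le(1-c)(a^r+b^r)+C(a^{r-1/2}b^{1/2}+a^{1/2}b^{r-1/2})$ to land on the stated form. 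Your version isolates more cleanly \emph{why} the angular cutoff produces a Povzner-type negative term (the transfer stays a fixed factor below its extreme), at the cost of one extra elementary lemma; the paper's version gets the half-powers for free from $g\le2\sqrt{p^0q^0}$ but hides the role of the cutoff inside the denominator estimate. Both yield explicit constants ($c_r=\tfrac{1-\kappa_r}{2}$ versus $c_r=1-2^{1-r}$), and both are valid.
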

\begin{proof}
Note that $p'^0+q'^0=p^0+q^0$ is a conserved quantity
for each $p^0$ and $q^0$. Let $p^\alpha$ and $q^\alpha$ be given.
Then, $G$ reduces to a function of $\omega$ and can be written as
\begin{align*}
G(\omega)&=(p'^0)^r+(q'^0)^r-(p^0)^r-(q^0)^r\cr
&=(p'^0)^r+(p^0+q^0-p'^0)^r-(p^0)^r-(q^0)^r.
\end{align*}
Note that $G$ has minimum at $p'^0=(p^0+q^0)/2$, i.e., when $p'^0=q'^0$, and
is monotonically increasing as $p'^0$ tends to $0$ or $p^0+q^0$.
Hence, $G$ attains its maximum when $p'^0-q'^0$ is extremal.
Without loss of generality we may assume $p'^0\geq q'^0$.
From \eqref{parametrization for p' and q'} we obtain
\begin{align*}
p'^0-q'^0&=g\frac{t^0}{\sqrt{t_\alpha t^\alpha}}
=g\frac{R^2(n\cdot\omega)}{\sqrt{R^2(n^0)^2-R^4(n\cdot\omega)^2}}.
\end{align*}
This quantity attains its maximum when $\omega$ is parallel to $n$,
which means that $p'^0$ is largest and $q'^0$ is smallest when
$n\cdot\omega=|n|$, in particular $q'^0\leq \min\{p^0,q^0\}$, which implies
again $(q'^0)^r\leq \min\{(p^0)^r,(q^0)^r\}$. Consequently, $G$ attains its
maximum when $n\cdot\omega=|n|$ and is estimated as
\begin{align*}
G(\omega)&\leq G(n/|n|)\leq \left.(p'^0)^r\right|_{\omega=n/|n|}
-\max\{(p^0)^r,(q^0)^r\}\cr
&=\left.\left(\frac{p^0+q^0}{2}
+\frac{g}{2}\frac{R^2|n|}{\sqrt{t_\alpha t^\alpha}}\right)^r\right|_{\omega=n/|n|}
-\max\{(p^0)^r,(q^0)^r\}.
\end{align*}
By applying Lemma \ref{Lem elementary inequalities}, Lemma
\ref{Lem tt is greater than s}, and the inequality
\begin{equation}\label{estimate of a plus b ^r}
(a+b)^r\leq a^r+b^r+ C_r(a^{r-1}b+ab^{r-1})\quad\mbox{for}\quad r>1,
\end{equation}
we obtain
\begin{align*}
G(\omega)&\leq \left(\frac{p^0+q^0}{2}+\frac{R|p+q|}{2}\right)^r
-\max\{(p^0)^r,(q^0)^r\}\cr
&\leq (p^0+q^0)^r-\max\{(p^0)^r,(q^0)^r\}\cr
&\leq \min\{(p^0)^r,(q^0)^r\}+C_r((p^0)^{r-1}q^0+p^0(q^0)^{r-1})\cr
&\leq C_r((p^0)^{r-1}q^0+p^0(q^0)^{r-1}),
\end{align*}
and this proves the first result.

To prove the second result, we take the assumption
$|n\cdot\omega|\leq \frac{1}{\sqrt{2}}|n|$ and suppose $p'^0\geq q'^0$. Then,
$p'^0$ is estimated as
\begin{align*}
p'^0&\leq \frac{p^0+q^0}{2}+\frac{g}{2}
\frac{R^2|n\cdot\omega|}{\sqrt{R^2(n^0)^2-R^4(n\cdot\omega)^2}}\\
&\leq \frac{p^0+q^0}{2}+\frac{g}{2\sqrt{2}}\frac{R^2|p+q|}{\sqrt{R^2(p^0+q^0)^2
-\frac{1}{2}R^4|p+q|^2}}
\leq \frac{p^0+q^0}{2}+\frac{g}{2}\\
&\leq \frac{(\sqrt{p^0}+\sqrt{q^0})^2}{2},
\end{align*}
where we used Lemma \ref{Lem elementary inequalities}.
Then, $G$ is estimated as
\begin{align*}
G&\leq 2(p'^0)^r-(p^0)^r-(q^0)^r\cr
&\leq \frac{(\sqrt{p^0}+\sqrt{q^0})^{2r}}{2^{r-1}}-(p^0)^r-(q^0)^r\cr
&\leq \frac{(p^0)^r}{2^{r-1}}+\frac{(q^0)^r}{2^{r-1}}
+C_r((p^0)^{r-\frac{1}{2}}(q^0)^{\frac{1}{2}}
+(p^0)^{\frac{1}{2}}(q^0)^{r-\frac{1}{2}})-(p^0)^r-(q^0)^r\cr
&\leq C_r((p^0)^{r-\frac{1}{2}}(q^0)^{\frac{1}{2}}
+(p^0)^{\frac{1}{2}}(q^0)^{r-\frac{1}{2}})-c_r((p^0)^r+(q^0)^r),
\end{align*}
where we used \eqref{estimate of a plus b ^r}.
The constants $C_r$ and $c_r$ are two different positive constants depending
on $r$, and this completes the proof.
\end{proof}
\begin{remark}
Those types of inequalities given in Lemma \ref{Lem estimate of G} are called
the Povzner inequality, which was originally proved by Povzner in \cite{P62}.
This inequality has been crucially used to prove existence theorems for the
non-relativistic spatially homogeneous Boltzmann equation by Elmroth \cite{E83}
and Mischler and Wennberg \cite{MW99}. The sharpest form of the Povzner
inequality is given by Mischler and Wennberg, but Lemma
\ref{Lem estimate of G} corresponds to a relativistic extension of Elmroth's
result.
\end{remark}

\subsection{Hard potential case in Minkowski space}
\label{Sec hard potential in M}
Comparing the two forms \eqref{boltzmann equation in M} and
\eqref{boltzmann equation with S} for the Boltzmann equation, we can see that
the collision kernel $S$ of \eqref{boltzmann equation with S}
corresponds to $g\sqrt{s}\sigma(g,\theta)$ of \eqref{boltzmann equation in M}.
Hence, if the quantity $g\sqrt{s}\sigma(g,\theta)$ is suitably truncated,
then the truncated equation has a global solution
by Proposition \ref{Prop existence for truncated}.
For simplicity the following notations will be used:
\[
v_{\phi,m}:=\frac{\min\{g\sqrt{s},m\}}{p^0q^0},\quad
g_m:=\min\{g,m\},\quad
\sigma_{0,m}(\omega):=\min\{\sigma_0(\omega),m\}.
\]
For each integer $m$, let $f_m$ be a solution of the following truncated
equation with initial data $f_m(0)=f_0$:
\begin{equation}\label{boltzmann equation truncated}
\partial_t f_m=Q_m(f_m,f_m),
\end{equation}
where $Q_m$ is defined as
\[
Q_m(h,h):=\iint v_{\phi,m}(g_m)^a\sigma_{0,m}
(\omega)(h'h'_{*}-hh_{*})\,d\omega\,dq.
\]
Then, the truncated equation has a unique global solution
$f_m\in C([0,\infty);L^1_1(\bbr^3))$ by Proposition
\ref{Prop existence for truncated}.
Consequently, a sequence $\{f_m\}$ is obtained,
and existence of solutions for the original equation
\eqref{boltzmann equation in M} is proved by showing that the sequence
$\{f_m\}$ is a Cauchy sequence in $L^1_1(\bbr^3)$. This argument was used for
the non-relativistic case by Mischler and Wennberg in \cite{MW99}, and below we
will show that their argument is applicable to the relativistic case.

\begin{lemma}\label{Lem uniform boundedness of Y_r}
For any $r\geq 0$ and $T>0$, there exists a constant $C_r$ which does not
depend on $m$ such that if $\|f_0\|_{1,r}$ is bounded, then
\[
\sup_m\sup_{t\in[0,T]}\|f_m(t)\|_{1,r}\leq C_r.
\]
\end{lemma}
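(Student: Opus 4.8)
The plan is to derive a differential inequality for the weighted moments $Y_r(t):=\|f_m(t)\|_{1,r}$ that is uniform in $m$, and then integrate it. The key identity is the weak formulation in Lemma \ref{Lem f'g'-fg to p'+q'-p-q}, applied with $r$ replaced by the weight exponent and with $k(g,s,\omega)=\min\{g\sqrt{s},m\}\,(g_m)^a\,\sigma_{0,m}(\omega)$, which is indeed a function of $g$, $s$ and $\omega$ only. Multiplying the truncated equation \eqref{boltzmann equation truncated} by $\langle p\rangle^r=(p^0)^r$ and integrating in $p$ gives
\[
\frac{d}{dt}Y_r(t)=\frac12\iiint\frac{k(g,s,\omega)}{p^0q^0}f_mf_{m,*}\,G\,d\omega\,dq\,dp,
\]
where $G=(p'^0)^r+(q'^0)^r-(p^0)^r-(q^0)^r$. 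The point of splitting the $\omega$-integral into the region $\{|n\cdot\omega|\le\frac1{\sqrt2}|n|\}$ and its complement is that on the first region the Povzner inequality (second bound of Lemma \ref{Lem estimate of G}) supplies a genuinely negative term $-c_r((p^0)^r+(q^0)^r)$, while on the complement — which has bounded measure on $\bbs^2$ — the crude first bound $G\le C_r((p^0)^{r-1}q^0+p^0(q^0)^{r-1})$ suffices.

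Next I would bound the kernel factor. By Lemma \ref{Lem elementary inequalities}, $g\sqrt s\le 4p^0q^0$, so $k(g,s,\omega)/(p^0q^0)\le 4(g_m)^a\sigma_{0,m}(\omega)\le 4g^a\sin^\gamma\theta$; and again by Lemma \ref{Lem elementary inequalities}, $g\le 2\sqrt{p^0q^0}$, so $g^a\le 2^a(p^0q^0)^{a/2}$. Since $0\le a\le\gamma+2\le 1$ under the standing assumption \eqref{scattering kernel}, the factor $(p^0q^0)^{a/2}$ contributes at most a half power of $p^0$ and of $q^0$ to each term. Also $\int_{\bbs^2}\sin^\gamma\theta\,d\omega<\infty$ because $\gamma>-2$. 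Putting these together, the contribution of the ``good'' region is bounded by
\[
C_r\iint f_mf_{m,*}\Big((p^0)^{r-\frac12+\frac a2}(q^0)^{\frac12+\frac a2}+(p^0)^{\frac12+\frac a2}(q^0)^{r-\frac12+\frac a2}\Big)dp\,dq
-c_r\iint f_mf_{m,*}(p^0)^{r+\frac a2}(q^0)^{\frac a2}dp\,dq,
\]
and symmetrically for the complementary region with exponent $r-1+\frac a2$ in place of $r-\frac12+\frac a2$. In terms of moments this reads
\[
\frac{d}{dt}Y_r\le C_r\big(Y_{r-\frac12+\frac a2}\,Y_{\frac12+\frac a2}+Y_{r-1+\frac a2}\,Y_{1+\frac a2}+Y_{r-1+\frac a2}\,Y_0\big)-c_r\,Y_{r+\frac a2}\,Y_{\frac a2}.
\]
Since $a\le 1$ the first ``source'' moment index $r-\frac12+\frac a2\le r$ and the dissipated index $r+\frac a2>r-\frac12+\frac a2$, so by Hölder (Jensen) interpolation $Y_{r-\frac12+\frac a2}\le Y_{r+\frac a2}^{\,\alpha}Y_{\frac a2}^{\,1-\alpha}$ for a suitable $\alpha\in[0,1)$; an analogous interpolation handles the $r-1+\frac a2$ terms. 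Because mass is controlled — from the $r=0$ case one gets $\frac{d}{dt}Y_0\le 0$, hence $Y_0(t)\le\|f_0\|_{L^1}$, and more generally $Y_{\frac a2}$, $Y_{\frac12+\frac a2}$, $Y_{1+\frac a2}$ are all dominated by $Y_0$ and $Y_1$, and $Y_1$ is controlled by Proposition \ref{Prop existence for truncated} uniformly in $m$ — the right-hand side takes the form $A\,Y_{r+\frac a2}^{\,\alpha}-c\,Y_{r+\frac a2}$ (up to lower-order bounded terms) with constants independent of $m$. Since $\alpha<1$, this forces $Y_{r+\frac a2}$, and a fortiori $Y_r$, to stay bounded: either $Y_{r+\frac a2}$ is already below the value where the cubic-type balance is neutral, or it decreases. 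One then runs a bootstrap, first establishing the bound for $r\in[1,1+\frac a2]$ using only Proposition \ref{Prop existence for truncated}'s control of $Y_1$, then iterating upward in steps of size $\frac a2$ (if $a>0$) to reach arbitrary $r$; if $a=0$ the kernel is bounded and the claim is essentially Proposition \ref{Prop existence for truncated} together with a direct moment estimate.

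The main obstacle I anticipate is not any single estimate but making the interpolation-plus-absorption argument genuinely uniform in $m$ and in $t\in[0,T]$: one has to check that the constant $A$ multiplying $Y_{r+\frac a2}^{\alpha}$ depends only on $\|f_0\|_{L^1}$, $\|f_0\|_{1,1}$, $r$ and $T$ (through the already-established lower moments) and not on $m$, and that the differential inequality $\dot y\le A y^{\alpha}-c y+(\text{bounded})$ is correctly integrated to a bound of the form $\max\{y(0),(A/c)^{1/(1-\alpha)},\dots\}$ valid on all of $[0,T]$. A secondary technical point is justifying the weak formulation (Lemma \ref{Lem f'g'-fg to p'+q'-p-q}) and the moment differentiation for the truncated solutions, where the kernel is bounded so everything is finite, and then checking that the bootstrap in $r$ only ever invokes moments of lower order that have already been bounded at the previous stage.
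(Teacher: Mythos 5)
Your overall skeleton matches the paper's: the weak formulation of Lemma \ref{Lem f'g'-fg to p'+q'-p-q}, the Povzner-type splitting of Lemma \ref{Lem estimate of G} according to whether $|n\cdot\omega|\le\frac{1}{\sqrt2}|n|$, the upper bounds $g\sqrt s\le 4p^0q^0$ and $g\le 2\sqrt{p^0q^0}$, and the observation that $a\le\gamma+2\le 1$ keeps the source moments at order $\le r$. However, the central mechanism you use to close the estimate contains a genuine error: the dissipative term $-c_r\,Y_{r+\frac a2}Y_{\frac a2}$ does not exist. To extract it from $I_3$ you would need a pointwise \emph{lower} bound of the form $v_{\phi,m}(g_m)^a\gtrsim (p^0q^0)^{a/2}$, i.e.\ $\min\{g\sqrt s,m\}(g_m)^a\gtrsim (p^0q^0)^{1+a/2}$; this fails badly, since $g\to 0$ as $q\to p$ while the right-hand side stays bounded away from zero. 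The inequalities of Lemma \ref{Lem elementary inequalities} only bound $g$ from above, so the kernel can only be removed from the \emph{positive} terms, never from the negative one. Consequently the interpolation $Y_{r-\frac12+\frac a2}\le Y_{r+\frac a2}^{\alpha}Y_{\frac a2}^{1-\alpha}$ followed by the ODE absorption $\dot y\le Ay^{\alpha}-cy$ has nothing to absorb against, and your argument as written does not close.

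The paper closes the estimate differently, and without any coercivity. It keeps the full kernel $v_{\phi,m}(g_m)^a\sigma_{0,m}$ inside both $I_1$ (the bad $\omega$-region, with bound $C_r((p^0)^{r-1}q^0+p^0(q^0)^{r-1})$) and $I_3$ (the negative part on the good region, where $\sigma_{0,m}\ge 1$ and the region has $\omega$-measure $2\sqrt2\pi$ independent of $m$), and then splits the $(p,q)$-integration of $I_1$ by the condition $D_r(p^0)^{r-1}\le d_r(q^0)^{r-1}$. On that set the integrand of $I_{11}$ is pointwise dominated by that of $I_3$ --- legitimate precisely because the same kernel factor appears on both sides --- so $I_{11}\le I_3$ cancels. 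On the complementary set one has $q^0\lesssim p^0$, so $(g_m)^a\le C(p^0q^0)^{a/2}\le C_r(p^0)^a$ and $I_{12}\le C_r\|f_m\|_{1,r-1+a}\le C_r\|f_m\|_{1,r}$; together with $I_2\le C_r\|f_m\|_{1,r-\frac12+\frac a2}\|f_m\|_{1,\frac12+\frac a2}\le C_r\|f_m\|_{1,r}$ (using $\frac12+\frac a2\le 1$ and Proposition \ref{Prop existence for truncated}) this yields the \emph{linear} inequality $\frac{d}{dt}\|f_m\|_{1,r}\le C_r\|f_m\|_{1,r}$ and a one-shot Gr\"onwall on $[0,T]$, with no bootstrap in $r$. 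If you want to salvage your own route, you could drop the dissipation entirely and bound the bad-region term by $Y_{r-1+\frac a2}Y_{1+\frac a2}$, first closing at $r=1+\frac a2$ (where both factors reduce to controlled or self-referential moments) and then at general $r$; but the $A y^{\alpha}-cy$ structure you describe is not obtainable from this kernel.
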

\begin{proof}
We first note that by Proposition \ref{Prop existence for truncated}
\[
\sup_{t\in[0,\infty)}\|f_m(t)\|_{1,r}\leq C\quad\mbox{for}\quad 0\leq r\leq 1,
\]
where $C$ does not depend on $m$, and for $r\leq s$,
\[
\|f_m(t)\|_{1,r}\leq \|f_m(t)\|_{1,s}.
\]
We now assume $r>1$.
By direct calculations we have
\begin{align*}
&\frac{d}{dt}\|f_m(t)\|_{1,r}\\
&=\iiint v_{\phi,m}(g_m)^a\sigma_{0,m}(\omega)
(f_m'f_{m*}'-f_mf_{m*})(p^0)^r\,d\omega\,dq\,dp\\
&=\frac{1}{2}\iiint v_{\phi,m} (g_m)^a\sigma_{0,m}(\omega)f_mf_{m*}
((p'^0)^r+(q'^0)^r-(p^0)^r-(q^0)^r)\,d\omega\,dq\,dp,
\end{align*}
where we used Lemma \ref{Lem f'g'-fg to p'+q'-p-q}.
We apply Lemma \ref{Lem estimate of G} to obtain
\[
\frac{d}{dt}\|f_m(t)\|_{1,r}\leq I_1+I_2-I_3,
\]
where
\begin{align*}
I_1&= C_r\iiint_{|n\cdot\omega|\geq\frac{1}{\sqrt{2}}|n|}
v_{\phi,m}(g_m)^a\sigma_{0,m}(\omega) f_mf_{m*}
((p^0)^{r-1}q^0+p^0(q^0)^{r-1})\,d\omega\,dq\,dp,\\
I_2&= C_r\iiint_{|n\cdot\omega|\leq\frac{1}{\sqrt{2}}|n|}
v_{\phi,m}(g_m)^a\sigma_{0,m}(\omega) f_mf_{m*}
((p^0)^{r-\frac{1}{2}}(q^0)^{\frac{1}{2}}
+(p^0)^\frac{1}{2}(q^0)^{r-\frac{1}{2}})\,d\omega\,dq\,dp,\\
I_3&= c_r\iiint_{|n\cdot\omega|\leq\frac{1}{\sqrt{2}}|n|}
v_{\phi,m}(g_m)^a\sigma_{0,m}(\omega) f_mf_{m*}
((p^0)^r+(q^0)^r)\,d\omega\,dq\,dp.
\end{align*}
The second term $I_2$ is easily estimated by using Lemma
\ref{Lem elementary inequalities} as
\begin{align*}
I_2&\leq C_r\iint f_mf_{m*}(p^0)^{r-\frac{1}{2}+\frac{a}{2}}
(q^0)^{\frac{1}{2}+\frac{a}{2}}\,dq\,dp
\leq C_r \|f_m(t)\|_{1,r-\frac{1}{2}+\frac{a}{2}}\|f_m(t)\|_{1,\frac{1}{2}+\frac{a}{2}}.
\end{align*}

Consider now $\sigma_{0,m}(\omega)$, which is defined by
\[
\sigma_{0,m}(\omega):=\min\{\sin^\gamma\theta,m\}\quad\mbox{for}\quad
-2<\gamma\leq -1.
\]
Note that $\sigma_{0,m}(\omega)$ is integrable on $\bbs^2$ for $\gamma>-2$,
and there exists a constant $C_\gamma$ satisfying
$\int_{\bbs^2}\sigma_{0,m}(\omega)\,d\omega\leq C_\gamma$,
where the constant $C_\gamma$ does not depend on $m$. On the other hand, since
$\gamma$ is negative, we have $\sigma_{0,m}(\omega)\geq 1$ for any $m$.
Moreover, the integration domain of $I_3$ is a set with Lebesgue measure
\[
\mu\{|n\cdot\omega|\leq |n|/\sqrt{2}\}=2\sqrt{2}\pi,
\]
which does not depend on $m$.
Hence, $I_1$ and $I_3$ can be estimated as
\begin{align*}
I_1&\leq D_r\iint_{\bbr^6} v_{\phi,m}(g_m)^a f_mf_{m*}(p^0)^{r-1}q^0\,dq\,dp,\\
I_3&\geq d_r\iint_{\bbr^6} v_{\phi,m}(g_m)^a f_mf_{m*}(q^0)^r\,dq\,dp,
\end{align*}
for some constants $D_r$ and $d_r$.
We now fix the constants $D_r$ and $d_r$ to split the domain
by $\{D_r(p^0)^{r-1}\leq d_r(q^0)^{r-1}\}$ and
$\{D_r(p^0)^{r-1}\geq d_r(q^0)^{r-1}\}$,
and then obtain $I_1\leq I_{11}+I_{12}$, where
\begin{align*}
I_{11}&= D_r\iint_{D_r(p^0)^{r-1}\leq d_r(q^0)^{r-1}}
v_{\phi,m}(g_m)^a f_mf_{m*}(p^0)^{r-1}q^0\,dq\,dp,\cr
I_{12}&= D_r\iint_{D_r(p^0)^{r-1}\geq d_r(q^0)^{r-1}}
v_{\phi,m}(g_m)^a f_mf_{m*}(p^0)^{r-1}q^0\,dq\,dp.
\end{align*}
We now obtain
\[
I_{11}\leq I_3.
\]
In the case of $I_{12}$, we may simply use
$(g_m)^a\leq C(p^0q^0)^{\frac{a}{2}}\leq C_r(p^0)^a$.
Then, $I_{12}$ is easily estimated as
\[
I_{12}\leq C_r \|f_m(t)\|_{1,r-1+a}.
\]
Combining the above estimates, we obtain
\begin{align*}
&\frac{d}{dt}\|f_m(t)\|_{1,r}\\
&\leq C_r\left( \|f_m(t)\|_{1,r-\frac{1}{2}+\frac{a}{2}}
\|f_m(t)\|_{1,\frac{1}{2}+\frac{a}{2}}+\|f_m(t)\|_{1,r-1+a}\right)
\leq C_r\|f_m(t)\|_{1,r},
\end{align*}
where we used the fact that $0\leq a\leq\gamma+2$ and $-2<\gamma\leq -1$.
Then, the lemma is proved by applying Gr{\"o}nwall's inequality.
\end{proof}
\begin{lemma}\label{Lem Cauchy sequence}
Consider the sequence $\{f_m\}$ on any finite time interval $[0,T]$.
For any small number $\delta>0$,
there exists a positive integer $M$ such that if $k,m\geq M$, then
\[
\sup_{t\in[0,T]}\|f_k(t)-f_m(t)\|_{1,1}\leq \delta.
\]
\end{lemma}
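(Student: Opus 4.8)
The plan is to show that $\{f_m\}$ is Cauchy in $C([0,T];L^1_1(\bbr^3))$ by deriving a differential inequality for $\|f_k(t)-f_m(t)\|_{1,1}$ of Gr\"onwall type, with an inhomogeneous term that tends to $0$ as $M\to\infty$. Write $h:=f_k-f_m$ and split the difference of collision operators as
\[
Q_k(f_k,f_k)-Q_m(f_m,f_m)
=\big(Q_k(f_k,f_k)-Q_k(f_m,f_m)\big)+\big(Q_k(f_m,f_m)-Q_m(f_m,f_m)\big).
\]
The first bracket is bilinear in $h$ and $f_k+f_m$ (via $f_k'f_{k*}'-f_m'f_{m*}'=f_k'h_*'+h'f_{m*}'$ and similarly for the loss term), while the second bracket measures the difference of the two truncated kernels and is controlled because the kernels $v_{\phi,k}(g_k)^a\sigma_{0,k}$ and $v_{\phi,m}(g_m)^a\sigma_{0,m}$ agree except on the region where $g\sqrt s\geq M$, or $g\geq M$, or $\sigma_0(\omega)\geq M$, a region whose contribution is small when weighted against the higher moments of $f_m$ that are uniformly bounded by Lemma \ref{Lem uniform boundedness of Y_r}.

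First I would handle the bilinear term. Testing against the weight $(p^0)^1$ and using the standard pre-post change of variables (Lemma \ref{Lem change of variables}) on the gain part, one bounds
\[
\int_{\bbr^3}\big(Q_k(f_k,f_k)-Q_k(f_m,f_m)\big)\,\mathrm{sgn}(h)\,(p^0)\,dp
\lesssim \iint v_{\phi,k}(g_k)^a\sigma_{0,k}(\omega)\,|h|\,(f_k+f_m)_*\big((p^0)+(q^0)\big)\,d\omega\,dq\,dp,
\]
and then, since $v_{\phi}(g)^a\lesssim (p^0q^0)^{-1}(p^0q^0)^{(a+1)/2}\lesssim (p^0)^{(a-1)/2}(q^0)^{(a-1)/2}$ with $a\le\gamma+2\le 1$, the extra powers of $p^0$ and $q^0$ are absorbed into weights of order at most $1$ on $|h|$ and of bounded order on $(f_k+f_m)_*$. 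Using that $\sigma_{0,k}$ has $L^1(\bbs^2)$-norm bounded uniformly in $k$, this term is $\lesssim C_T\|h(t)\|_{1,1}$, where $C_T$ depends only on $\sup_m\sup_{[0,T]}\|f_m\|_{1,s}$ for a fixed $s>1$, which is finite by Lemma \ref{Lem uniform boundedness of Y_r}.

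Next I would handle the truncation-difference term. On the set where all three truncations are inactive the two kernels coincide and the integrand vanishes; on the complementary set one has, say, $g\sqrt s\ge M$ (hence $g\gtrsim\sqrt M$ since $s=4+g^2$) or $\sigma_0(\omega)\ge M$. Bounding $|Q_k(f_m,f_m)-Q_m(f_m,f_m)|$ crudely by the full (untruncated) collision integrand restricted to this bad set, testing against $(p^0)$, and using $g\lesssim (p^0q^0)^{1/2}$ together with $g^{a}\lesssim g^{\gamma+2}$, the contribution is dominated by
\[
\iint_{\{g\gtrsim\sqrt M\}\cup\{\sigma_0\ge M\}} g^{\gamma+1}s^{1/2}(p^0q^0)^{-1}\sin^\gamma\theta\,
f_m f_{m*}\big((p^0)+(q^0)\big)\,d\omega\,dq\,dp,
\]
and since $g\lesssim (p^0q^0)^{1/2}$, the factor $g^{\gamma+1}$ on $\{g\gtrsim\sqrt M\}$ produces a gain of $M^{-\epsilon}$ for some $\epsilon>0$ (because $\gamma+1\le 0$, we instead extract smallness by noting $\{g\gtrsim\sqrt M\}$ forces $p^0q^0\gtrsim M$, so $(p^0)^{-\delta}(q^0)^{-\delta}\lesssim M^{-\delta}$ can be pulled out while the remaining moments of $f_m$ stay bounded), while on $\{\sigma_0\ge M\}$ one uses $\int_{\{\sin^\gamma\theta\ge M\}}\sin^\gamma\theta\,d\omega\to 0$ as $M\to\infty$ since $\sin^\gamma\theta\in L^1(\bbs^2)$. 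In either case this term is $\le \varepsilon(M)$ with $\varepsilon(M)\to 0$, uniformly in $k,m\ge M$ and $t\in[0,T]$.

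Combining, $\frac{d}{dt}\|h(t)\|_{1,1}\le C_T\|h(t)\|_{1,1}+\varepsilon(M)$ with $\|h(0)\|_{1,1}=0$, so Gr\"onwall gives $\sup_{[0,T]}\|h(t)\|_{1,1}\le \varepsilon(M)(e^{C_T T}-1)/C_T$, which is $\le\delta$ once $M$ is large. The main obstacle I anticipate is the rigorous justification that the sign-function test and the pre-post change of variables can be applied to the difference equation to produce the claimed $L^1_1$ bilinear estimate (the $\mathrm{sgn}(h)$ trick requires the usual care with the gain term via Lemma \ref{Lem change of variables}, exactly as in Mischler--Wennberg), together with verifying that the powers of $p^0,q^0$ arising from $v_\phi$, from $g^a$, and from Lemma \ref{Lem estimate of G}-type weight shifts never exceed the moments that Lemma \ref{Lem uniform boundedness of Y_r} controls; this is where the constraints $-2<\gamma\le-1$ and $0\le a\le\gamma+2$ are used decisively.
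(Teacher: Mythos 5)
Your overall strategy---splitting $Q_k(f_k,f_k)-Q_m(f_m,f_m)$ into a bilinear part and a kernel-difference part, estimating the latter on the bad sets $\{g\sqrt s\ge M\}$, $\{g\ge M\}$, $\{\sigma_0(\omega)\ge M\}$ against the uniform higher moments of Lemma \ref{Lem uniform boundedness of Y_r}, and closing with Gr\"onwall---is exactly the paper's, and your treatment of the kernel-difference term matches the paper's $J_1,J_2,J_3$. The gap is in the bilinear estimate. First, the pointwise bound $v_\phi\, g^a\lesssim (p^0q^0)^{(a-1)/2}$ is false: Lemma \ref{Lem elementary inequalities} only gives $g\le 2\sqrt{p^0q^0}$ and $\sqrt s\le 2\sqrt{p^0q^0}$, hence $v_\phi\, g^a=g^{1+a}\sqrt s/(p^0q^0)\lesssim (p^0q^0)^{a/2}$; taking $q=-p$ with $|p|$ large shows your claimed bound fails by a factor $(p^0q^0)^{1/2}$. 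Second, with the correct bound, your crude weight $(p^0+q^0)$ on $|h|\,(f_k+f_m)_*$ produces the term $\iint |h|\,(p^0)^{1+a/2}(f_k+f_m)_*(q^0)^{a/2}\,dq\,dp$, i.e.\ a factor $\|h\|_{1,1+a/2}$, which cannot be absorbed into $\|h\|_{1,1}$ when $a>0$; the differential inequality does not close.

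The missing idea is to retain the \emph{negative} loss contribution $\mathrm{sgn}(h)\cdot(-h)(f_{k*}+f_{m*})\,p^0=-|h|(f_{k*}+f_{m*})\,p^0$ (the paper's $I_3$) rather than discarding it. Then the four contributions combine, via the conservation law $p'^0+q'^0=p^0+q^0$, into the weight $p'^0+q'^0-p^0+q^0=2q^0$, so all excess weight falls on the $(f_{k*}+f_{m*})$ factor and one obtains
\[
I\lesssim \iint (g_k)^a\,|h|\,(f_{k*}+f_{m*})\,q^0\,dq\,dp
\lesssim \sup_n\|f_n(t)\|_{1,1+\frac a2}\,\|h(t)\|_{1,\frac a2}\le C_T\,\|h(t)\|_{1,1},
\]
using $(g_k)^a\lesssim (p^0)^{a/2}(q^0)^{a/2}$ and $a\le\gamma+2\le 1$. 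With this correction your argument coincides with the paper's proof.
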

\begin{proof}
Let $k\leq m$ be two positive integers. By direct calculations,
\[
\frac{d}{dt}\|f_k(t)-f_m(t)\|_{1,1}= I+J,
\]
where
\begin{align*}
I&=\int\mbox{sgn}(f_k-f_m)(Q_k(f_k,f_k)-Q_k(f_m,f_m))p^0\,dp,\\
J&=\int\mbox{sgn}(f_k-f_m)(Q_k(f_m,f_m)-Q_m(f_m,f_m))p^0\,dp,
\end{align*}
and $I$ and $J$ will be estimated separately.
The first term $I$ is split again as
\begin{align*}
I&=\frac{1}{2}\iiint\mbox{sgn}(f_k-f_m)v_{\phi,k}(g_k)^a\sigma_{0,k}(\omega)\\
&\quad\times((f_k'-f_m')(f_{k*}'+f_{m*}')+(f_k'+f_m')(f_{k*}'-f_{m*}')\\
&\quad\quad-(f_k-f_m)(f_{k*}+f_{m*})-(f_k+f_m)(f_{k*}-f_{m*}))
p^0\,d\omega\,dq\,dp\\
&=:I_1+I_2+I_3+I_4.
\end{align*}
Each $I_i$ is estimated as follows:
\begin{align*}
I_1&\leq\frac{1}{2}\iiint v_{\phi,k}(g_k)^a\sigma_{0,k}(\omega)
|f_k'-f_m'|(f_{k*}'+f_{m*}')p^0\,d\omega\,dq\,dp\\
&=\frac{1}{2}\iiint v_{\phi,k}(g_k)^a\sigma_{0,k}(\omega)
|f_{k}-f_{m}|(f_{k*}+f_{m*})p'^0\,d\omega\,dq\,dp,
\end{align*}
\begin{align*}
I_2&\leq\frac{1}{2}\iiint v_{\phi,k}(g_k)^a\sigma_{0,k}(\omega)
(f_k'+f_m')|f_{k*}'-f_{m*}'|p^0\,d\omega\,dq\,dp\\
&=\frac{1}{2}\iiint v_{\phi,k}(g_k)^a\sigma_{0,k}(\omega)
(f_{k*}+f_{m*})|f_{k}-f_{m}|q'^0\,d\omega\,dq\,dp,
\end{align*}
\begin{align*}
I_3&=-\frac{1}{2}\iiint\mbox{sgn}(f_k-f_m)v_{\phi,k}(g_k)^a\sigma_{0,k}(\omega)
(f_k-f_m)(f_{k*}+f_{m*})p^0\,d\omega\,dq\,dp\cr
&=-\frac{1}{2}\iiint v_{\phi,k}(g_k)^a\sigma_{0,k}(\omega)
|f_{k}-f_{m}|(f_{k*}+f_{m*})p^0\,d\omega\,dq\,dp,
\end{align*}
and finally
\begin{align*}
I_4&\leq \frac{1}{2}\iiint v_{\phi,k}(g_k)^a\sigma_{0,k}(\omega)
(f_k+f_m)|f_{k*}-f_{m*}|p^0\,d\omega\,dq\,dp\\
&\leq\frac{1}{2}\iiint v_{\phi,k}(g_k)^a\sigma_{0,k}(\omega)
(f_{k*}+f_{m*})|f_{k}-f_{m}|q^0\,d\omega\,dq\,dp.
\end{align*}
Therefore, $I$ is estimated as
\begin{align*}
I&\leq \frac{1}{2}\iiint v_{\phi,k} (g_k)^a\sigma_{0,k}(\omega)
|f_{k}-f_{m}|(f_{k*}+f_{m*})
(p'^0+q'^0-p^0+q^0)\,d\omega\,dq\,dp\cr
&\leq C\iint (g_k)^a|f_{k}-f_{m}|(f_{k*}+f_{m*})q^0\,dq\,dp,
\end{align*}
where we used
\[
p'^0+q'^0=p^0+q^0.
\]
By using $g_k\leq 2\sqrt{p^0q^0}$,
we obtain for $I$
\begin{align}
I&\leq C\iint |f_{k}-f_{m}|(f_{k*}+f_{m*})(p^0)^{\frac{a}{2}}
(q^0)^{1+\frac{a}{2}}\,dq\,dp\cr
&\leq C\sup_n\|f_n(t)\|_{1,1+\frac{a}{2}}\|f_k(t)-f_m(t)\|_{1,\frac{a}{2}}.
\label{estimate of I}
\end{align}

To estimate the second term $J$, we note that
\begin{align*}
|v_{\phi,k}-v_{\phi,m}|&=\frac{1}{p^0q^0}|\min\{g\sqrt{s},k\}
-\min\{g\sqrt{s},m\}|\\
&\leq {\bf 1}_{\{g\sqrt{s}\geq k\}}\frac{\min\{g\sqrt{s},m\}}{p^0q^0}
={\bf 1}_{\{g\sqrt{s}\geq k\}}v_{\phi,m},
\end{align*}
and similarly
\begin{align*}
&|(g_k)^a-(g_m)^a|\leq {\bf 1}_{\{g\geq k\}}(g_m)^a,\\
&|\sigma_{0,k}(\omega)-\sigma_{0,m}(\omega)|\leq {\bf 1}_{\{\sin^\gamma\theta\geq k\}}
\sigma_{0,m}(\omega).
\end{align*}
Hence, $J$ can be estimated as
\begin{align*}
J&\leq\iiint |v_{\phi,k}(g_k)^a\sigma_{0,k}(\omega)
-v_{\phi,m}(g_m)^a\sigma_{0,m}(\omega)|
|f_m'f'_{m*}-f_mf_{m*}|p^0\,d\omega\,dq\,dp\\
&\leq \iiint {\bf 1}_{\{g\sqrt{s}\geq k\}}v_{\phi,m}(g_k)^a\sigma_{0,k}(\omega)
(f_m'f'_{m*}+f_mf_{m*})p^0\,d\omega\,dq\,dp\\
&\quad +\iiint {\bf 1}_{\{g\geq k\}}v_{\phi,m}(g_m)^a\sigma_{0,k}(\omega)
(f_m'f'_{m*}+f_mf_{m*})p^0\,d\omega\,dq\,dp\\
&\quad +\iiint {\bf 1}_{\{\sin^\gamma\theta\geq k\}}v_{\phi,m}(g_m)^a\sigma_{0,m}(\omega)
(f_m'f'_{m*}+f_mf_{m*})p^0\,d\omega\,dq\,dp\\
&=:J_1+J_2+J_3.
\end{align*}
Note that each $J_i$ can be separated into two terms: a gain term containing
$f'_mf'_{m*}$ and a loss term containing $f_mf_{m*}$. The gain and loss terms
are estimated in the same way after making the change of variables
$(p,q)\leftrightarrow(p',q')$, hence we only present the estimates for the
loss terms. To estimate $J_1$, we take a small number $\varepsilon>0$ and
use $g\sqrt{s}\leq 4p^0q^0$ from Lemma \ref{Lem elementary inequalities}:
\begin{align}
J_1&\leq C\iint{\bf 1}_{\{4p^0q^0\geq k\}}(g_k)^af_mf_{m*}p^0\,dq\,dp\cr
&\leq C\iint {\bf 1}_{\{4p^0q^0\geq k\}}f_m(p^0)^{1+\frac{a}{2}}
f_{m*}(q^0)^{\frac{a}{2}}\,dq\,dp\cr
&\leq \frac{C}{k^\varepsilon}
\iint {\bf 1}_{\{4p^0q^0\geq k\}}f_m(p^0)^{1+\frac{a}{2}+\varepsilon}
f_{m*}(q^0)^{\frac{a}{2}+\varepsilon}\,dq\,dp\cr
&\leq \frac{C}{k^\varepsilon}\|f_m(t)\|_{1,1+\frac{a}{2}+\varepsilon}
\|f_m(t)\|_{1,\frac{a}{2}+\varepsilon}.
\label{estimate of J1}
\end{align}
To estimate $J_2$, we use $g\leq|p-q|$ to obtain
\begin{align}
J_2&\leq C\iint {\bf 1}_{\{|p-q|\geq k\}}f_m(p^0)^{1+\frac{a}{2}}f_{m*}
(q^0)^\frac{a}{2}\,dq\,dp\cr
&\leq C\iint{\bf 1}_{\{|p|\geq \frac{k}{2}\}\cup\{|q|\geq \frac{k}{2}\}}
f_m(p^0)^{1+\frac{a}{2}}f_{m*}(q^0)^{\frac{a}{2}}\,dq\,dp\cr
&\leq C\|f_m(t)\|_{1,1+\frac{a}{2}}\int{\bf 1}_{\{|q|\geq\frac{k}{2}\}}f_{m*}
(q^0)^{\frac{a}{2}}\,dq\cr
&\leq \frac{C}{k}\|f_m(t)\|_{1,1+\frac{a}{2}}^2.
\label{estimate of J2}
\end{align}
For $J_3$ term, we use $\sin\theta\approx\theta$ for
$0\leq\theta\leq\frac{\pi}{2}$.
Hence, the condition $\sin^\gamma\theta\geq k$ is equivalent to
$\theta\leq Ck^{\frac{1}{\gamma}}$ since $\gamma$ is negative.
We first estimate $J_3$ as
\begin{align*}
J_3&\leq C\iiint {\bf 1}_{\{\theta\leq Ck^\frac{1}{\gamma}\}}\sigma_{0,m}(\omega)
f_m(p^0)^{1+\frac{a}{2}}f_{m*}(q^0)^{\frac{a}{2}}\,d\omega\,dq\,dp\cr
&\leq C\|f_m(t)\|_{1,1+\frac{a}{2}}\|f_m(t)\|_{1,\frac{a}{2}}
\int{\bf 1}_{\{\theta\leq Ck^\frac{1}{\gamma}\}}\sigma_{0,m}(\omega)\,d\omega.
\end{align*}
The integration on $\bbs^2$ above is estimated as
\begin{align*}
\int{\bf 1}_{\{\theta\leq Ck^\frac{1}{\gamma}\}}\sigma_{0,m}(\omega)\,d\omega
&\leq 2\pi\int_0^{Ck^{\frac{1}{\gamma}}}\sin^{\gamma+1}\theta\,d\theta
\leq Ck^{\frac{\gamma+2}{\gamma}},
\end{align*}
where the constant depends on $\gamma$. Note that
$-1\leq (\gamma+2)/\gamma<0$, and the third term $J_3$ is estimated as
\begin{equation}\label{estimate of J3}
J_3\leq Ck^{\frac{\gamma+2}{\gamma}}\|f_m(t)\|_{1,1+\frac{a}{2}}\|f_m(t)\|_{1,\frac{a}{2}}.
\end{equation}
We combine \eqref{estimate of I}, \eqref{estimate of J1},
\eqref{estimate of J2},
and \eqref{estimate of J3}, and apply Lemma \ref{Lem uniform boundedness of Y_r}
on any finite time interval $[0,T]$ to obtain
\begin{align*}
\frac{d}{dt}\|f_k(t)-f_m(t)\|_{1,1}&\leq C(k^{-\varepsilon}
+k^{-1}+k^{\frac{\gamma+2}{\gamma}})
+C\|f_k(t)-f_m(t)\|_{1,\frac{a}{2}}\\
&\leq C(k^{-\varepsilon}+k^{\frac{\gamma+2}{\gamma}})+C\|f_k(t)-f_m(t)\|_{1,1}.
\end{align*}
Since $f_k(0)=f_m(0)$ and $(\gamma+2)/\gamma$ is negative,
we obtain the desired result by applying Gr{\"o}nwall's inequality.
\end{proof}
We now obtain the following theorem.
\begin{theorem}\label{Thm main 1}
Suppose that the scattering kernel has the form  \eqref{scattering kernel}
and initial data satisfies $f_0\in L^1_{r}(\bbr^3)$
for $r>1+\frac{a}{2}$ with $f_0\geq 0$.
Then, the Boltzmann equation \eqref{boltzmann equation in M} in Minkowski space
has a unique global solution $f\in C([0,\infty);L^1_1(\bbr^3))$ with $f(t)\geq 0$.
\end{theorem}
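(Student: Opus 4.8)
The plan is to realize $f$ as the limit of the approximating sequence $\{f_m\}$ built from the truncated equation \eqref{boltzmann equation truncated}, gluing together the moment bound of Lemma~\ref{Lem uniform boundedness of Y_r} with the Cauchy estimate of Lemma~\ref{Lem Cauchy sequence}. First I would observe that $f_0\in L^1_r(\bbr^3)$ with $r>1+\frac a2$ entails $f_0\in L^1_1(\bbr^3)$, and that for each $m$ the truncated collision operator $Q_m$ is, after the identification $S_m:=\min\{g\sqrt s,m\}\,(g_m)^a\,\sigma_{0,m}(\omega)$ between \eqref{boltzmann equation in M} and \eqref{boltzmann equation with S}, of the type covered by Proposition~\ref{Prop existence for truncated}: $S_m$ is nonnegative, bounded (by $m^{a+2}$), and invariant under $(p,q)\leftrightarrow(p',q')$ since $g$, $s$ and $\theta$ are collision invariants. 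Taking $R\equiv1$, Proposition~\ref{Prop existence for truncated} then provides a unique $f_m\in C([0,\infty);L^1_1(\bbr^3))$ with $f_m\ge0$ and $\|f_m(t)\|_{1,1}\le\|f_0\|_{1,1}$.

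Next, fixing $T>0$ and $\varepsilon\in(0,r-1-\frac a2)$, I would note that every weight occurring in Lemmas~\ref{Lem uniform boundedness of Y_r}--\ref{Lem Cauchy sequence}, namely $r$, $1+\frac a2+\varepsilon$, $\frac a2+\varepsilon$, $1+\frac a2$ and $\frac a2$, is at most $r$, so Lemma~\ref{Lem uniform boundedness of Y_r} yields constants $C_\rho$, independent of $m$, with $\sup_m\sup_{[0,T]}\|f_m(t)\|_{1,\rho}\le C_\rho$ for all these $\rho$. Lemma~\ref{Lem Cauchy sequence} then shows $\{f_m\}$ is Cauchy in $C([0,T];L^1_1(\bbr^3))$, and letting $T\to\infty$ produces a limit $f\in C([0,\infty);L^1_1(\bbr^3))$. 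Passing to an a.e.\ convergent subsequence gives $f\ge0$, and Fatou's lemma upgrades the uniform bounds to $\sup_{[0,T]}\|f(t)\|_{1,\rho}\le C_\rho$ for $\rho\le r$; in particular $f(t)\in L^1_{1+\frac a2+\varepsilon}(\bbr^3)$ for each $t$.

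It then remains to pass to the limit in the mild form $f_m(t)=f_0+\int_0^tQ_m(f_m,f_m)(s)\,ds$. I would split $Q_m(f_m,f_m)-Q(f,f)=\bigl(Q_m(f_m,f_m)-Q_m(f,f)\bigr)+\bigl(Q_m(f,f)-Q(f,f)\bigr)$ and bound the two pieces in $L^1_1(\bbr^3)$ by re-running, with the limit $f$ in the place of $f_m$, the estimates for the terms $I$ and $J$ in the proof of Lemma~\ref{Lem Cauchy sequence}: the first piece is $\le C\bigl(\sup_n\|f_n\|_{1,1+\frac a2}+\|f\|_{1,1+\frac a2}\bigr)\|f_m-f\|_{1,\frac a2}\to0$, while the second is $\le C\bigl(m^{-\varepsilon}+m^{-1}+m^{(\gamma+2)/\gamma}\bigr)\|f\|_{1,1+\frac a2+\varepsilon}\|f\|_{1,\frac a2+\varepsilon}\to0$, using the uniform moment bounds (which in particular make $Q(f,f)$ a well-defined element of $L^1_1(\bbr^3)$). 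Dominated convergence under the time integral then gives $f(t)=f_0+\int_0^tQ(f,f)(s)\,ds$, so $f$ solves \eqref{boltzmann equation in M}. For uniqueness I would take two solutions $f,\tilde f$ with the same initial datum; propagating the $L^1_r$ moment by the computation of Lemma~\ref{Lem uniform boundedness of Y_r} applied to the untruncated equation shows $\sup_{[0,T]}(\|f\|_{1,1+\frac a2}+\|\tilde f\|_{1,1+\frac a2})<\infty$, and then the term-$I$ estimate of Lemma~\ref{Lem Cauchy sequence}, in which $g^a\le 2^a(p^0q^0)^{a/2}$ and $\frac a2\le\frac12<1$ let the right-hand side be absorbed, gives $\frac{d}{dt}\|f-\tilde f\|_{1,1}\le C\|f-\tilde f\|_{1,1}$; Gr\"onwall with $f(0)=\tilde f(0)$ finishes it.

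I expect the main obstacle to be the passage to the limit in the collision operator: one must verify that the untruncated gain and loss terms are genuinely integrable and that $Q_m(f_m,f_m)-Q(f,f)$ is small in $L^1_1(\bbr^3)$ uniformly on $[0,T]$. This is precisely where the strengthened hypothesis $r>1+\frac a2$ (rather than merely $f_0\in L^1_1$) enters, through the $m$-uniform moment bounds of Lemma~\ref{Lem uniform boundedness of Y_r}; the moment propagation needed for uniqueness is a secondary technical point of the same nature.
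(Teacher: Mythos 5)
Your proposal is correct and follows essentially the same route as the paper: truncate the kernel so that Proposition~\ref{Prop existence for truncated} applies, use Lemma~\ref{Lem uniform boundedness of Y_r} (with the hypothesis $r>1+\frac a2$ covering all weights appearing in Lemma~\ref{Lem Cauchy sequence}) to get $m$-uniform moment bounds, conclude via the Cauchy property that $\{f_m\}$ converges in $L^1_1$, and prove uniqueness by rerunning the term-$I$ estimate with Gr\"onwall. You simply spell out the limit passage in the mild formulation and the nonnegativity/moment inheritance, which the paper leaves implicit.
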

\begin{proof}
Lemma \ref{Lem Cauchy sequence} shows that the sequence $\{f_m\}$ is a Cauchy
sequence in $L^1_1(\bbr^3)$. Hence, there exists a solution $f$ to the
Boltzmann equation \eqref{boltzmann equation in M}.
The initial condition $f_0\in L^1_r(\bbr^3)$ with $r>1+\frac{a}{2}$ comes from
\eqref{estimate of J1}, and nonnegativity of the solution is guaranteed by
Proposition \ref{Prop existence for truncated}. Uniqueness of solutions can be easily
proved by following the calculations given in the proof of Lemma \ref{Lem Cauchy sequence}.
The initial condition of the theorem is rather strong in the sense that $L^1_1$ solutions
are obtained from $L^1_r$ initial data for $r>1+\frac{a}{2}$, hence the proof of uniqueness
is much easier than that of \cite{MW99}.
This completes the proof of the theorem.
\end{proof}

\subsection{Hard potential case in the Robertson-Walker spacetime}
\label{Sec hard potential in RW}
In this part we extend the existence result of Theorem \ref{Thm main 1} to the
Robertson-Walker case. The argument is basically the same as in the Minkowski
case. We first truncate a certain part of the collision kernel, then existence
for the truncated equation is guaranteed by Proposition
\ref{Prop existence for truncated}. We obtain a sequence of solutions to the
truncated equations, and by showing that the sequence is a Cauchy sequence we
obtain a solution.

To deal with the Boltzmann equation \eqref{boltzmann equation in RW} in the
Robertson-Walker spacetime, we use the covariant variable $v$. In the
Minkowski case we have $p^0=\langle p\rangle$, so there is no difference
between the following quantities:
\[
\int f(t,p)(p^0)^r\,dp=\int f(t,p)\langle p\rangle^r\,dp.
\]
However, in the Robertson-Walker case we have
\[
v^0=\sqrt{1+R^{-2}(t)|v|^2}\quad\mbox{and}\quad \langle v\rangle =\sqrt{1+|v|^2}.
\]
If we introduce a new quantity:
\[
|f(t)|_{1,r}:=\int f(t,v)(v^0)^r\,dv,
\]
then we have the following relation:
\begin{equation}\label{equivalence between norms}
|f(t)|_{1,r}\leq\|f(t)\|_{1,r}\leq R^r(t)|f(t)|_{1,r},
\end{equation}
where we used
\begin{equation}\label{equivalence between weight functions}
v^0\leq\langle v\rangle\leq R(t)v^0.
\end{equation}
Note that
\[
|f(0)|_{1,r}=\|f(0)\|_{1,r},
\]
since we assume $R(0)=1$.

The main goal of this section is to extend Theorem \ref{Thm main 1} to the
Robertson-Walker case. We will show that the lemmas in the previous section
can be applied to the Robertson-Walker case. To do this we first use $v^0$ as
a weight function, i.e. we estimate $|f(t)|_{1,r}$, because all the calculations
in the Minkowski case are naturally extended to the Robertson-Walker case when
using $v^0$ instead of $\langle v\rangle$, for instance we can use the
energy-momentum conservation. Then, by the relation
\eqref{equivalence between norms} we obtain estimates for $\|f(t)\|_{1,r}$.
Similarly to the Minkowski case we modify the Boltzmann equation
\eqref{boltzmann equation in RW} as
\[
\partial_tf_m=Q_m(f_m,f_m)
:=R^{-3}\iint v_{\phi,m}(g_m)^a\sigma_{0,m}(\omega)(f'_mf'_{m*}-f_mf_{m*})\,
d\omega\,du,
\]
where
\[
v_{\phi,m}=\frac{\min\{g\sqrt{s},m\}}{v^0u^0},\quad
g_m=\min\{g,m\},\quad
\sigma_{0,m}=\min\{\sigma_0(\omega),m\}.
\]
Then, the truncated equation has a unique global solution by Proposition
\ref{Prop existence for truncated}.
The following lemmas show that Lemma \ref{Lem uniform boundedness of Y_r} and
\ref{Lem Cauchy sequence} can be extended to the Robertson-Walker case.
\begin{lemma}\label{Lem uniform boundedness of Y_r in RW}
For any $r\geq 0$ and $T>0$, there exists a constant $C_r$ which does not
depend on $m$
such that if $\|f_0\|_{1,r}$ is bounded, then
\[
\sup_m\sup_{t\in[0,T]}|f_m(t)|_{1,r}+\|f_m(t)\|_{1,r}\leq C_r.
\]
\end{lemma}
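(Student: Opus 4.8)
The plan is to run the argument of Lemma~\ref{Lem uniform boundedness of Y_r} essentially verbatim, but with the weight $(v^0)^r$ in place of $\langle y\rangle^r$: first I would estimate $|f_m(t)|_{1,r}$ and only afterwards pass to $\|f_m(t)\|_{1,r}$ via \eqref{equivalence between norms}. The reason the Minkowski proof transfers is that $v^0=p^0$, so energy--momentum conservation in the form $v'^0+u'^0=v^0+u^0$, together with Lemma~\ref{Lem change of variables}, Lemma~\ref{Lem f'g'-fg to p'+q'-p-q}, the elementary bounds of Lemma~\ref{Lem elementary inequalities} (in particular $v_{\phi,m}(g_m)^a\le C(v^0u^0)^{a/2}$) and the Povzner inequality of Lemma~\ref{Lem estimate of G}, are all available in exactly the Robertson--Walker form required. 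The only structural novelty is the prefactor $R^{-3}(t)$ in front of the collision integral, and since $\dot R\ge0$ with $R(0)=1$ give $R(t)\ge1$, hence $R^{-3}(t)\le1$, this factor is harmless.

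Concretely, for $0\le r\le1$ Proposition~\ref{Prop existence for truncated} already yields $\|f_m(t)\|_{1,r}\le\|f_m(t)\|_{1,1}\le\|f_0\|_{1,1}$ uniformly in $m$, and then $|f_m(t)|_{1,r}\le\|f_m(t)\|_{1,r}$ by \eqref{equivalence between norms}. For $r>1$ I would differentiate $|f_m(t)|_{1,r}$, apply Lemma~\ref{Lem f'g'-fg to p'+q'-p-q} and then Lemma~\ref{Lem estimate of G} to obtain $\frac{d}{dt}|f_m(t)|_{1,r}\le R^{-3}(t)(I_1+I_2-I_3)$, where $I_1,I_2,I_3$ are the exact analogues of the quantities in the proof of Lemma~\ref{Lem uniform boundedness of Y_r} but carrying the weights $(v^0)^r,(u^0)^r$. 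Because the prefactor $R^{-3}(t)$ multiplies the whole expression, the absorption step is untouched: after fixing $D_r,d_r$ one still has $I_{11}\le I_3$, since this is merely a comparison of two integrals carrying the identical kernel $v_{\phi,m}(g_m)^a\sigma_{0,m}(\omega)$ and, on the region $\{D_r(v^0)^{r-1}\le d_r(u^0)^{r-1}\}$, $(v^0)^{r-1}u^0\le\tfrac{d_r}{D_r}(u^0)^r$. Hence $I_1+I_2-I_3\le I_{12}+I_2$, and $I_{12},I_2$ are bounded exactly as before by $C_r|f_m(t)|_{1,r}$, using $a\le\gamma+2\le1$ (so that $r-1+a\le r$ and $\tfrac12+\tfrac a2\le1$) and $|f_m(t)|_{1,1}\le\|f_0\|_{1,1}$. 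Multiplying back in $R^{-3}(t)\le1$ then gives $\frac{d}{dt}|f_m(t)|_{1,r}\le C_r|f_m(t)|_{1,r}$ with $C_r$ independent of $m$, and Gr\"onwall's inequality together with $|f_m(0)|_{1,r}=\|f_0\|_{1,r}$ (recall $R(0)=1$) yields a bound for $|f_m(t)|_{1,r}$ on $[0,T]$ uniform in $m$.

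Finally, to recover $\|f_m(t)\|_{1,r}$ I would invoke \eqref{equivalence between norms} once more: $\|f_m(t)\|_{1,r}\le R^r(t)|f_m(t)|_{1,r}$, and since $R$ is differentiable (hence continuous) and nondecreasing, $\sup_{t\in[0,T]}R(t)=R(T)<\infty$, so both $|f_m(t)|_{1,r}$ and $\|f_m(t)\|_{1,r}$ are controlled on $[0,T]$ by a constant depending only on $r$, $T$, $R(T)$ and $\|f_0\|_{1,r}$, not on $m$. I do not expect a serious obstacle here; the one point requiring genuine care is checking that inserting the single factor $R^{-3}(t)$ in front of the collision integral does not spoil the dissipation mechanism of Lemma~\ref{Lem uniform boundedness of Y_r} --- that is, that the negative term $I_3$ still absorbs $I_{11}$ --- which works precisely because $R^{-3}(t)$ factors out of $I_1,I_2,I_3$ simultaneously and $R^{-3}(t)\le1$ leaves the surviving positive terms bounded.
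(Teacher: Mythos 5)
Your overall strategy is exactly the paper's: bound $|f_m(t)|_{1,r}$ first by rerunning the Minkowski argument with the weight $(v^0)^r$, then transfer to $\|f_m(t)\|_{1,r}$ via \eqref{equivalence between norms}, noting that $R^{-3}(t)\le 1$ makes the prefactor harmless and that $\sup_{[0,T]}R(t)=R(T)<\infty$ controls the conversion factor. The absorption step $I_{11}\le I_3$ and the bounds on $I_2$, $I_{12}$ carry over as you say.

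There is, however, one point you have overlooked, and it is precisely the point the paper singles out as the new ingredient: the prefactor $R^{-3}(t)$ is \emph{not} the only structural novelty. The weight itself is time-dependent, since $v^0=\sqrt{1+R^{-2}(t)|v|^2}$, so differentiating $|f_m(t)|_{1,r}=\int f_m(t,v)(v^0)^r\,dv$ produces, in addition to the collision integral, the term
\begin{equation*}
\int f_m(t,v)\,\partial_t\big[(v^0)^r\big]\,dv
=-r\,\frac{\dot R(t)}{R^3(t)}\int f_m(t,v)\,\frac{|v|^2}{v^0}\,(v^0)^{r-1}\,dv .
\end{equation*}
Your displayed inequality $\frac{d}{dt}|f_m(t)|_{1,r}\le R^{-3}(t)(I_1+I_2-I_3)$ is therefore only valid because this extra term happens to be nonpositive, which in turn relies on the standing assumption $\dot R(t)\ge 0$; as written, your argument silently identifies $\frac{d}{dt}\int f_m(v^0)^r\,dv$ with $\int \partial_t f_m\,(v^0)^r\,dv$. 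This is a small gap and easily repaired — just compute $\partial_t v^0$, observe its sign, and discard the resulting term — but it must be stated, since it is the one place where the monotonicity of the scale factor enters the proof of this lemma. With that addition your argument coincides with the paper's.
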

\begin{proof}
We first estimate $|f_m(t)|_{1,r}$, and then obtain the desired result by using
the relation \eqref{equivalence between norms}.
By Proposition \ref{Prop existence for truncated}
and the relation \eqref{equivalence between norms} we have
\[
\sup_{t\in[0,\infty)}|f_m(t)|_{1,r}\leq C\quad\mbox{for}\quad 0\leq r\leq 1,
\]
and now assume $r>1$.
In the Robertson-Walker case,
$v^0$ depends on time and decreases as time evolves for each $v$. To be precise,
\[
v^0=\sqrt{1+R^{-2}(t)|v|^2}\quad\mbox{and}\quad
\partial_tv^0=-\frac{\dot{R}(t)}{R^3(t)}\frac{|v|^2}{v^0}\leq 0,
\]
since we assume $R(t)\geq 1$ and $\dot{R}(t)\geq 0$.
If we follow the calculation of the proof of Lemma
\ref{Lem uniform boundedness of Y_r},
then we obtain
\begin{align*}
&\frac{d}{dt}|f_m(t)|_{1,r}\\
&=\frac{R^{-3}(t)}{2}\iiint v_{\phi,m}(g_m)^a\sigma_{0,m}(\omega)f_mf_{m*}
((v'^0)^r+(u'^0)^r-(v^0)^r-(u^0)^r)\,d\omega\,du\,dv\\
&\quad+\int f_m(t,v)\frac{\partial}{\partial t}\Big[(v^0)^r\Big]\,dv,
\end{align*}
and the second integral is negative.
Hence, we may only consider
\begin{align*}
&\frac{d}{dt}|f_m(t)|_{1,r}\\
&\leq\frac{R^{-3}(t)}{2}\iiint v_{\phi,m}(g_m)^a\sigma_{0,m}(\omega)f_mf_{m*}
((v'^0)^r+(u'^0)^r-(v^0)^r-(u^0)^r)\,d\omega\,du\,dv,
\end{align*}
and follow the same calculations of Lemma \ref{Lem uniform boundedness of Y_r}
to obtain
\[
\sup_m\sup_{t\in[0,T]}|f_m(t)|_{1,r}\leq C_r.
\]
Consequently, the relation \eqref{equivalence between norms} gives the desired
result.
\end{proof}
\begin{lemma}\label{Lem Cauchy sequence in RW}
Consider the sequence $\{f_m\}$ on any finite time interval $[0,T]$. For any
small number $\delta>0$, there exists a positive number $M$ such that if
$k,m\geq M$, then
\[
\sup_{t\in[0,T]}\|f_k(t)-f_m(t)\|_{1,1}\leq\delta.
\]
\end{lemma}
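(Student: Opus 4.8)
The plan is to mimic the proof of Lemma \ref{Lem Cauchy sequence} for the Minkowski case, replacing $p$ by the covariant variable $v$, $p^0$ by $v^0$, the measure $dq$ by $R^{-3}\,du$, and $\langle p\rangle$-weights by $v^0$-weights, and then converting back to the $\|\cdot\|_{1,1}$-norm at the end via the relation \eqref{equivalence between norms}. Concretely, I would first write
\[
\frac{d}{dt}|f_k(t)-f_m(t)|_{1,1}
=\int\mbox{sgn}(f_k-f_m)\big(Q_k(f_k,f_k)-Q_k(f_m,f_m)\big)v^0\,dv
+\int\mbox{sgn}(f_k-f_m)\big(Q_k(f_m,f_m)-Q_m(f_m,f_m)\big)v^0\,dv
=:I+J,
\]
plus a boundary term $\int(f_k-f_m)\,\partial_t[(v^0)]\,\mbox{sgn}(f_k-f_m)\,dv$ coming from the time-dependence of the weight $v^0$; since $\partial_t v^0\le 0$ (as noted in the proof of Lemma \ref{Lem uniform boundedness of Y_r in RW}) this term is bounded by $\int|f_k-f_m|\,|\partial_t v^0|\,dv$, which is harmless — it is controlled by a constant times $|f_k-f_m|_{1,1}$ using $|\partial_t v^0|=\frac{\dot R}{R^3}\frac{|v|^2}{v^0}\le \frac{\dot R}{R^3}v^0$ and the fact that $R$ and $\dot R$ are bounded on $[0,T]$.

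Next I would estimate $I$ exactly as in the Minkowski case: split into the four pieces $I_1,\dots,I_4$ coming from the bilinearity $Q_k(f_k,f_k)-Q_k(f_m,f_m)$, use the change of variables $(v,u)\leftrightarrow(v',u')$ from Lemma \ref{Lem change of variables} (which holds in the Robertson-Walker setting and gives $\frac{1}{v^0u^0}\,dv\,du=\frac{1}{v'^0u'^0}\,dv'\,du'$), invoke energy-momentum conservation $v'^0+u'^0=v^0+u^0$, and then use the bounds $g_k\le 2\sqrt{v^0u^0}$ from Lemma \ref{Lem elementary inequalities} together with the uniform weighted bounds from Lemma \ref{Lem uniform boundedness of Y_r in RW} to obtain
\[
I\le C\,\sup_n|f_n(t)|_{1,1+\frac{a}{2}}\;|f_k(t)-f_m(t)|_{1,\frac{a}{2}}
\le C\,|f_k(t)-f_m(t)|_{1,1},
\]
using $a\le\gamma+2\le 1$. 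For $J$ I would again split $|v_{\phi,k}(g_k)^a\sigma_{0,k}-v_{\phi,m}(g_m)^a\sigma_{0,m}|$ into three indicator-weighted pieces $J_1,J_2,J_3$ as in Lemma \ref{Lem Cauchy sequence}, and handle them with the Robertson-Walker versions of the three elementary bounds: $g\sqrt s\le 4v^0u^0$ for $J_1$, $g\le R(t)|v-u|\le C_T|v-u|$ (from Lemma \ref{Lem elementary inequalities}, with $R$ bounded on $[0,T]$) for $J_2$, and $\sigma_0(\omega)\le\sin^\gamma\theta$ with $\sin\theta\approx\theta$ for $J_3$. Each of these is bounded by $C(k^{-\varepsilon}+k^{-1}+k^{(\gamma+2)/\gamma})$ times products of weighted norms that are uniformly bounded by Lemma \ref{Lem uniform boundedness of Y_r in RW}, hence by a quantity of the form $\varepsilon(k)\to 0$.

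Collecting everything, I arrive at
\[
\frac{d}{dt}|f_k(t)-f_m(t)|_{1,1}\le \varepsilon(k)+C_T\,|f_k(t)-f_m(t)|_{1,1},
\qquad \varepsilon(k):=C\big(k^{-\varepsilon}+k^{(\gamma+2)/\gamma}\big)\to 0,
\]
with $C_T$ depending on $T$ through $\sup_{[0,T]}R$, $\sup_{[0,T]}\dot R$ and the constants of Lemma \ref{Lem uniform boundedness of Y_r in RW}. Since $f_k(0)=f_m(0)=f_0$, Gr\"onwall gives $\sup_{t\in[0,T]}|f_k(t)-f_m(t)|_{1,1}\le \varepsilon(k)e^{C_TT}/C_T$, which can be made smaller than any prescribed $\delta'$ for $k$ (hence $k,m$ with $k\le m$) large; finally $\|f_k-f_m\|_{1,1}\le R^1(t)\,|f_k-f_m|_{1,1}\le \sup_{[0,T]}R\cdot \varepsilon(k)e^{C_TT}/C_T$ by \eqref{equivalence between norms}, so choosing $M$ large enough yields the claim. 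The one genuinely new point compared with the Minkowski proof — and the thing to be careful about — is the boundary term generated by $\partial_t(v^0)^{?}$; here it only appears at weight $r=1$ and, being negative in the monotonicity estimate but needing an absolute-value bound in the difference estimate, it must be checked that $|\partial_t v^0|\le \frac{\dot R}{R^3}v^0$ keeps it linearly controlled by $|f_k-f_m|_{1,1}$ with a $T$-dependent constant, which it does since $\dot R/R^3$ is bounded on $[0,T]$.
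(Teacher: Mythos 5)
Your proposal follows essentially the same route as the paper: estimate $|f_k-f_m|_{1,1}$ with the time-dependent weight $v^0$, reduce to the Minkowski argument of Lemma \ref{Lem Cauchy sequence}, and convert back to $\|\cdot\|_{1,1}$ via \eqref{equivalence between norms}. Two small points. First, the boundary term $\int|f_k-f_m|\,\partial_t v^0\,dv$ is nonpositive (since $\dot R\ge 0$), so the paper simply discards it; your absolute-value bound is unnecessary, and the inequality you quote for it, $|\partial_t v^0|\le\frac{\dot R}{R^3}v^0$, is in fact false for large $|v|$ --- one has $\frac{|v|^2}{v^0}\le R^2 v^0$, hence only $|\partial_t v^0|\le\frac{\dot R}{R}\,v^0$. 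This is harmless here because the term can be dropped outright, but as written that step would not survive scrutiny. Second, in covariant variables the bound of Lemma \ref{Lem elementary inequalities} reads $g\le R^{-1}(t)|v-u|\le|v-u|$ (since $v=R^2 p$), which needs no bound on $R$ from above; your $g\le R(t)|v-u|$ is still true but is not what the lemma gives directly. With these corrections the argument is sound and coincides with the paper's.
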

\begin{proof}
Similarly to the previous lemma, we first estimate $|f_k-f_m|_{1,1}$ and then
obtain the desired result by using \eqref{equivalence between norms}.
By direct calculation, we have
\begin{align*}
&\frac{d}{dt}|f_k(t)-f_m(t)|_{1,1}=\frac{d}{dt}\int|f_k(t,v)-f_m(t,v)|v^0\,dv\\
&=\int\partial_t\Big[|f_k(t,v)-f_m(t,v)|\Big]v^0
+|f_k(t,v)-f_m(t,v)|\partial_t v^0\,dv\\
&=\int\mbox{sgn}(f_k-f_m)(Q_k(f_k,f_k)-Q_m(f_m,f_m))v^0\,dv\\
&\quad-\frac{\dot{R}(t)}{R^3(t)}\int|f_k(t,v)-f_m(t,v)|\frac{|v|^2}{v^0}\,dv\\
&\leq\int\mbox{sgn}(f_k-f_m)(Q_k(f_k,f_k)-Q_m(f_m,f_m))v^0\,dv.
\end{align*}
Hence, we can follow the proof of Lemma \ref{Lem Cauchy sequence} and obtain
a positive number $M$ such that if $k,m\geq M$, then
\[
\sup_{t\in[0,T]}|f_k(t)-f_m(t)|_{1,1}\leq \delta.
\]
Consequently, we obtain the desired result by applying
\eqref{equivalence between norms}, and this completes the proof.
\end{proof}
By the same argument as in Theorem \ref{Thm main 1}, we obtain the following
theorem.
\begin{theorem}\label{Thm main 2}
Suppose that the scattering kernel has the form of \eqref{scattering kernel}
and initial data satisfies $f_0\in L^1_r(\bbr^3)$ for $r>1+\frac{a}{2}$ with
$f_0\geq 0$. Then, the Boltzmann equation \eqref{boltzmann equation in RW}
in the Robertson-Walker spacetime has a unique global solution
$f\in C([0,\infty);L^1_1(\bbr^3))$
with $f(t)\geq 0$.
\end{theorem}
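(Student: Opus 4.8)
The plan is to run the same scheme as in the proof of Theorem \ref{Thm main 1}, but systematically in the covariant variable $v$ with the time‑dependent weight $v^{0}=\sqrt{1+R^{-2}(t)|v|^{2}}$, and only at the very end to translate the conclusions back to the weight $\langle v\rangle$ by means of \eqref{equivalence between norms}. First, for each integer $m$ I would solve the truncated equation $\partial_{t}f_{m}=Q_{m}(f_{m},f_{m})$ with $f_{m}(0)=f_{0}$. After rewriting the right‑hand side in terms of the shifted variable $z=R^{2}(s)u$, exactly as in \eqref{boltzmann equation with A mild}, the truncated kernel is $A_{m}=\min\{g\sqrt{s},m\}\,(g_{m})^{a}\sigma_{0,m}(\omega)$, which is nonnegative, bounded (by $m^{2+a}$), and symmetric under $(v,u)\leftrightarrow(v',u')$ since $g$, $s$ and $\theta$ are; moreover $R(t)\geq R(0)=1$ because $\dot R\geq 0$. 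Hence Proposition \ref{Prop existence for truncated} applies and yields a unique global solution $f_{m}\in C([0,\infty);L^{1}_{1}(\bbr^{3}))$ with $f_{m}(t)\geq 0$ and $\|f_{m}(t)\|_{1,1}$ nonincreasing.

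Next I would invoke Lemma \ref{Lem uniform boundedness of Y_r in RW}: since $f_{0}\in L^{1}_{r}(\bbr^{3})$ with $r>1+\tfrac{a}{2}$, the quantities $|f_{m}(t)|_{1,r}$ and $\|f_{m}(t)\|_{1,r}$ are bounded uniformly in $m$ on every finite interval $[0,T]$, which in particular controls the moments of order $1+\tfrac{a}{2}+\varepsilon$ that enter the Robertson--Walker analogue of \eqref{estimate of J1}. Then Lemma \ref{Lem Cauchy sequence in RW} shows that $\{f_{m}\}$ is Cauchy in $C([0,T];L^{1}_{1}(\bbr^{3}))$ for every $T$; letting $T\to\infty$ produces a limit $f\in C([0,\infty);L^{1}_{1}(\bbr^{3}))$, and $f(t)\geq 0$ together with the mass bound is inherited from the $f_{m}$ via Proposition \ref{Prop existence for truncated}.

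It remains to verify that $f$ solves \eqref{boltzmann equation in RW} in the integrated sense and that it is the unique such solution. For the first point I would pass to the limit in the mild identity
\[
f_{m}(t,v)=f_{0}(v)+\int_{0}^{t}Q_{m}(f_{m},f_{m})(s,v)\,ds,
\]
writing $Q_{m}(f_{m},f_{m})-Q(f,f)=\big(Q_{m}(f_{m},f_{m})-Q(f_{m},f_{m})\big)+\big(Q(f_{m},f_{m})-Q(f,f)\big)$; the first (truncation) term tends to $0$ in $L^{1}$ uniformly on $[0,T]$ by precisely the estimates $J_{1},J_{2},J_{3}$ of Lemma \ref{Lem Cauchy sequence} applied with $f_{m}$ in place of the difference, and the second (stability) term, being bilinear in $f_{m}-f$, is controlled in $L^{1}_{1}$ by the $I$‑type estimate \eqref{estimate of I} together with the uniform moment bounds of Lemma \ref{Lem uniform boundedness of Y_r in RW}. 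Uniqueness follows by running the same $I$‑type Gr\"onwall estimate on the difference of two $L^{1}_{1}$ solutions arising from $L^{1}_{r}$ data, which is easy here since the hypothesis $r>1+\tfrac{a}{2}$ provides the extra half‑moment of room, exactly as remarked after Theorem \ref{Thm main 1}.

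The step I expect to require the most care is the bookkeeping of the time‑dependent weight. One must check that (i) differentiating $(v^{0})^{r}$ in $t$ contributes only the nonpositive term $-r\frac{\dot R}{R^{3}}\frac{|v|^{2}}{v^{0}}(v^{0})^{r-1}$, so that the Povzner--Gr\"onwall argument of Lemma \ref{Lem uniform boundedness of Y_r} carries over verbatim with $v^{0}$ replacing $\langle p\rangle$ — this is already the content of Lemma \ref{Lem uniform boundedness of Y_r in RW} — and (ii) the conversion factor $R^{r}(t)$ in \eqref{equivalence between norms} is finite on $[0,T]$, which holds since $R$ is continuous with $R(0)=1$ and $\dot R\geq 0$, so that uniform bounds on $|f_{m}|_{1,r}$ genuinely yield uniform bounds on $\|f_{m}\|_{1,r}$. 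Once these two points are secured, no analysis beyond that of Section \ref{Sec hard potential in M} is needed, and the theorem follows.
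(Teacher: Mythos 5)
Your proposal is correct and follows essentially the same route as the paper: truncate the kernel, invoke Proposition \ref{Prop existence for truncated} for the truncated problems, use Lemmas \ref{Lem uniform boundedness of Y_r in RW} and \ref{Lem Cauchy sequence in RW} (working with the weight $v^{0}$ so that its time derivative contributes a nonpositive term, then converting via \eqref{equivalence between norms}) to get a Cauchy sequence, and conclude as in Theorem \ref{Thm main 1}. The only difference is that you spell out the passage to the limit in the mild formulation and the uniqueness Gr\"onwall argument in more detail than the paper, which simply cites the argument of Theorem \ref{Thm main 1}; this is a welcome elaboration, not a different method.
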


\section{Summary and outlook}

In this paper global existence theorems have been proved for spatially
homogeneous solutions of the Boltzmann equation in Minkowski space and in
spatially flat Robertson-Walker spacetimes. This was done for a class of
collision kernels of hard potential type and these theorems extend existing
results for smooth collision kernels with compact support to cases which are
closer to those which naturally arise in physical problems.

There are many directions in which this work might be generalized. Do the
special relativistic solutions constructed here converge to equilibrium as
$t\to\infty$? If so, can a useful analogue be proved for the Robertson-Walker
spacetimes? Note that in the latter case equilibrium is impossible since the
existence of an equilibrium solution of the Boltzmann equation in a spacetime
implies the existence of a timelike Killing vector field (\cite{israel},
p. 1167) and vector fields of this kind do not exist in most Robertson-Walker
spacetimes. It is nevertheless the case that in practise ideas related to
equilibrium matter distributions are used in cosmology and so it should be
possible to formulate some mathematical analogue of equilibrium solutions in
an expanding cosmological model.

Now that a global existence result has been obtained in a class of homogeneous
and isotropic cosmological models it is natural to ask whether a similar
result can be proved in general homogeneous cosmological models which expand
for ever. In this paper the device of writing the equation in terms of the
covariant components of the momentum is used to simplify the equation.
In fact this trivializes the kinetic part of the equation. The same trick
would work in more general models of Bianchi type I but not in general
homogeneous models. It is nevertheless the case that this transformation
does produce some simplification in general and this has been exploited
in the study of the late-time behaviour of the Einstein-Vlasov system in
\cite{nungesser}.

Another generalization is to look at the global existence question for
homogeneous solutions of the Einstein-Boltzmann system. A template for this
could be provided by the results of this type for the Einstein equations
coupled to other matter models proved in \cite{rendall95}. For the class of
collision kernels considered here there are no local existence theorems
available for inhomogeneous solutions of the Einstein-Boltzmann system. There
are global existence theorems available for the Boltzmann equation with this
type of kernel in special relativity with small or close to homogeneous
initial data. The small data problem for the Einstein-Boltzmann system is out
of reach at present since even the corresponding problem for the
Einstein-Vlasov system has not been solved. On the other hand in the presence
of a positive cosmological constant there are global existence results for
solutions of the Einstein-Vlasov system evolving from data which are close to
homogeneous \cite{ringstrom}.

It is well-known that the Boltzmann equation is not well-posed in the past time
direction. In cosmology this equation is of interest for the very early
universe and so it is natural to enquire if solutions of the
Einstein-Boltzmann system can be constructed which extend all the way back to
the big bang. It is not reasonable to construct these by evolving backwards
in time and a possible alternative is to pose data at the singularity. A
formal study of this type of procedure has been carried out in \cite{tod}
but corresponding existence proofs have not yet been developed. In conclusion,
the study of the Boltzmann equation in curved spacetimes and its coupling
to the Einstein equations gives rise to a variety of challenging mathematical
problems.

\end{document}